\newcommand{\INT}{{\sf int}}
\newtheorem{theorem}{Theorem}[section]
\newtheorem{definition}{Definition}[section]
\newtheorem{assumption}{Assumption}[section]
\newtheorem{corollary}{Corollary}[section]
\newtheorem{claim}{Claim}[section]
\newcommand{\ANON}{{\small \sf AND$\overline{\mbox{a}}$NA}}
\newcommand{\ANONTITLE}{{\Large \sf AND$\overline{\mbox{a}}$NA}}
\newcommand{\ANONSMALL}{{\footnotesize \sf AND$\overline{\mbox{a}}$NA}}
\begin{document}

\title{\ANONTITLE{}: Anonymous Named Data Networking Application}
\author{
    Steven DiBenedetto\\
    Colorado State University\\
    {\it dibenede@cs.colostate.edu}
  \and
    {Paolo Gasti \hspace*{0.7cm} Gene Tsudik} \\
    University of California, Irvine\\
    {\it \{pgasti,gtsudik\}@uci.edu}
   \and
    Ersin Uzun\\
    Palo Alto Research Center\\
    {\it ersin.uzun@parc.com}
}
\date{}
\maketitle
\begin{abstract}
Content-centric networking --- also known as information-centric
networking (ICN) --- shifts emphasis from hosts and interfaces (as in
today's Internet) to data. Named data becomes addressable and
routable, while locations that currently store that data become
irrelevant to applications.

Named Data Networking (NDN) is a large
collaborative research effort that exemplifies the
content-centric approach to networking.
NDN has some innate privacy-friendly features, such as lack of source
and destination addresses on packets. However, as discussed in this
paper, NDN architecture prompts some privacy concerns mainly stemming
from the semantic richness of names. We examine privacy-relevant
characteristics of NDN and present an initial attempt to achieve
communication privacy. Specifically, we design an NDN
add-on tool, called \ANONSMALL{}, that borrows a number of features
from Tor. As we demonstrate via experiments, it provides comparable
anonymity with lower relative overhead.

\end{abstract}

\section{Introduction}
Although the Internet, as a whole, is a huge global success story, it
is showing clear signs of age. In the 1970s, when core ideas
underlying today's Internet were developed, telephony was the only
example of effective global-scale communications. Thus, while the
communication solution offered by the Internet's TCP/IP suite was
unique and ground-breaking, the communication paradigm it focused on
was similar to that of telephony: a point-to-point conversation
between two entities. The communication world has changed
dramatically since then and today's Internet has to accommodate:
information-intensive services, exabytes of content created and
consumed daily over the Web as well as a menagerie of mobile devices
connected to it. To keep pace with these changes and move the
Internet into the future, a number of research efforts to design new
Internet architectures have taken off in the last few years.

Named-Data Networking (NDN) \cite{NDN} is one such effort that
exemplifies the content-centric approach
\cite{gritter2001architecture, Jacobson2009, koponen2007data} to
networking. NDN names content instead of locations (i.e.,
hosts or interfaces) and thus transforms content into a first-class
entity. NDN also stipulates that each piece of content must be signed
by its producer. This allows decoupling of trust in content from trust
in the entity that might store and/or disseminate that content. These
NDN features facilitate automatic caching of content to optimize
bandwidth use and enable effective simultaneous utilization of
multiple network interfaces.

However, NDN introduces certain challenges that must be addressed in
order for it to be a serious candidate for the future Internet
architecture. One major argument for a new architecture is the
inadequate level of security and privacy in today's Internet.
We view anonymity as being a critical feature in any new network
architecture. It helps people overcome communication restrictions and
boundaries as well as evade censorship. In addition, some
applications (e.g., e-cash or anonymous publishing) can be
successfully deployed only if the underlying network allows users to
hide their identity \cite{chaum85}. Even if end-users do not care
about anonymity with respect to services they access, they might
still want to hide their activities from employers, governments and
ISPs, since those might censor, misuse or accidentally leak sensitive
information \cite{feigenbaum2007model}.

Lack of source/destination addresses in NDN helps
privacy, since NDN packets carry information only about \emph{what}
is requested but not \emph{who} is requesting it. However, a closer look
reveals that this is insufficient. In particular, NDN design introduces
three important privacy challenges:
\begin{compactenum}
\item {\bf Name privacy}: NDN content names are incentivized to be
    semantically related to the content itself. Similar to HTTP headers,
    names reveal significantly more information about content than
    IP addresses. Moreover, an observer can easily determine when two
    requests refer to the same (even encrypted) content.
\item {\bf Content privacy}: NDN allows any entity that knows
    a name to retrieve corresponding content. Encryption in
    NDN is used to enforce access control and is not
    applied to publicly available content. Thus, consumers wanting to
    retrieve public content cannot rely on encryption to hide what they access.
\item {\bf Cache privacy}: as with current web proxies, network
    neighbors may learn about each others' content access using
    timing information to identify cache hits.
\item {\bf Signature privacy}: since digital signatures in NDN
    content packets are required to be publicly verifiable, identity of a content
    signer may leak sensitive information.
\end{compactenum}
In this paper, we attempt to address these challenges. We present an
initial approach, called \ANON{} that can be viewed as an adaptation of onion routing to
NDN. Our approach is in-line with NDN principles.
It is designed to take advantage of NDN strengths and work around its
weaknesses. We optimized \ANON{} for small- to medium-size interactive
communication -- such as web-browsing and instant messaging --
that are characterized by moderate amounts of low-latency
traffic \cite{Callahan:2010:LVH:1889324.1889347}.

We provide a security analysis of the proposed approach under a
realistic adversarial model. Specifically, we define anonymity and
unlinkability under this security model and show that these
properties hold for \ANON{}. Moreover, \ANON{} is secure with fewer
anonymizing router hops than Tor. We prototyped \ANON{} and assessed
its performance via experiments over a network testbed. Results show
that \ANON{} introduces less overhead than Tor, especially, for
anticipated traffic patterns.

We believe that this work is both timely and important. The former --
because of the recent surge of interest in content-centric networking
and NDN being a good example of this paradigm. (Also, while NDN is
sufficiently mature to have a functional prototype suitable for
experimental use, it is still at an early enough stage to be open to
change.) The latter -- because it represents the first attempt to
identify and address privacy problems in a
viable candidate for the future Internet architecture.

\medskip

Before discussing details of our approach, we
present further motivation for this work.

\medskip\noindent{\bf Why NDN?}
There are multiple efforts to develop new content-centric
architectures and NDN is only one of those. We focus on NDN because
it stands out in several aspects. First, it combines some
revolutionary ideas about content-based routing that have attracted
considerable attention from the networking research community.
Second, it builds upon an open-source code-base called CCNx
\cite{ccnx}, that is led and continuously maintained by an industrial
research lab (PARC). At the time of this writing (summer 2011), NDN
is one of the very few content-centric architectural proposals with a
reasonably mature prototype available to the research
community.\footnote{We are aware of only two other content-centric
architecture proposals -- \cite{netinf} and \cite{pursuit} -- that
have public prototypes.} Third, NDN is one of only four projects
selected by NSF Future Internet Architectures (FIA) program
\cite{FIA}.

On the other hand, NDN is an on-going research project and is thus
subject to continuous change. However, we believe that it represents
a good example of content-centric networking design and at least some
of its concepts will influence the future of networking. More
importantly, ideas, techniques and analysis discussed in this paper
are not specific, or limited to, NDN; they are applicable to a wide
range of designs, including host-, location- and content-addressable
networks.

\medskip\noindent{\bf Approach. }
NDN follows the proven design principle of IP and claims to be the
``thin waist'' of the communications protocol stack. Thus, pushing
security or privacy services (that are not critical for all types of
communication) into this thin waist would contradict its design
principle. Consequently, as in the case of IP, we believe that
privacy tools should run on top of NDN. Looking at privacy and
anonymity techniques in today's Internet, one well-established
approach is an overlay anonymization network, exemplified by
Tor~\cite{tor}. Tor and its relatives employ layers of concentric
encryption and intermediate nodes responsible for peeling off layers
as packets travel through the overlay. This  is commonly
referred to as onion routing. Our approach falls into roughly the same
category. However, as we discover and discuss in this paper, the task
of adapting an anonymization overlay approach to NDN is not as simple
as it might initially seem.

\medskip\noindent{\textbf{Scope. }}
The primary focus of this paper is privacy. Security and other
features of NDN are taken as given without justifying their
existence. A number of important NDN-related security topics are out
of scope of this paper, including: trust management, certification
and revocation of credentials as well as routing security.

\medskip\noindent{\textbf{Organization. }}
We start with NDN overview and privacy analysis in Section
\ref{sec:netmodel}. Section \ref{sec:relwork} summarizes related
work, followed by the description of \ANON{} in Section
\ref{sec:andora-descr}. Section \ref{sec:security_analysis}
introduces a formal model for provable anonymity and security
analysis of \ANON{}. Implementation details and performance
evaluation results are discussed in Section \ref{sec:implem}. The
paper concludes in Section \ref{sec:conclusions}.

\section{NDN Overview}
\label{sec:netmodel}

NDN \cite{NDN} is a communication architecture based on named
content.\footnote{Note that we use the terms "content" and "data"
interchangeably throughout this paper.} Rather than addressing
content by its location, NDN refers to it by name. Content name is
composed of one or more variable-length components that are opaque to
the network. Component boundaries are explicitly delimited by ``{\tt
/}''. For example, the name of a CNN news content might be: {\small
\tt /ndn/cnn/news/2011aug20}. Large pieces of content can be split
into fragments with predictable names: fragment $137$ of a YouTube
video could be named: {\small \tt
/ndn/youtube/videos/video-749.avi/137}.

Since the main abstraction is content, there is no
explicit notion of ``hosts'' in NDN. (However, their existence is assumed.)
Communication adheres to the {\em pull} model: content is delivered
to consumers only upon explicit request. A consumer requests content
by sending an {\em interest} packet. If an entity (a router or a host)
can ``satisfy'' a given interest, it returns the corresponding {\em content}
packet. Interest and content are the only types of packets in NDN. A
content packet with name X in NDN is {\bf never} forwarded or routed
unless it is preceded by an interest for name X.\footnote{Strictly speaking,
content named $X'\neq~X$ can be delivered in response to an interest
for $X$ but only if $X$ is a prefix of $X'$. As an example, the full name of each content packet contains the hash of that content; however, this hash value is usually not known to consumers and is typically omitted from interests.}

When a router receives an interest for name X and there are no
pending interests for the same name in its PIT (Pending
Interests Table), it forwards this interest to the next hop according
to its routing table. For each forwarded interest, a router stores
some state information, including the name in the interest and the
interface on which it was received. However, if an interest for
X arrives while there is an entry for the same name in the
PIT, the router collapses the present interest (and any subsequent
ones for X) storing only the interface on which it was received. When
content is returned, the router forwards it out on all interfaces where
an interest for X has been received and flushes the corresponding
PIT entry. Note that, since no additional information is needed to deliver
content, an interest does not carry a source address. More detailed
discussion of NDN routing can be found in \cite{Jacobson2009}.

In NDN, each network entity can provide content caching, which is
limited only by resource availability. For popular content, this
allows interests to be satisfied from cached copies distributed over
the network, thus maximizing resource utilization. NDN
deals with content authenticity and integrity by making
digital signatures mandatory on all content packets. A signature
binds content with its name, and provides origin authentication no
matter how or from where it is retrieved.
NDN calls entities that publish
new content {\em producers}. Whereas, as follows from the above
discussion, entities that request content are called {\em consumers}.
(Consumers and producers are clearly overlapping sets.)
Although content signature verification is optional in NDN, a signature
must be verifiable by any NDN entity. To make this
possible, content packets carry additional metadata, such as the
ID of the content publisher and information on locating the
public key needed for verification. Public keys are treated as regular
content: since all content is signed, each public key content
is effectively a ``certificate''. NDN does not mandate any
particular certification infrastructure, relegating trust management
to individual applications.
Private or restricted content in NDN is protected via encryption by the
content publisher. Once content is distributed unencrypted, there is
no mechanism to apply subsequent encryption. Specific applications
may provide a means to explicitly request encryption of content by
publishers. However, NDN does not currently allow consumers to
selectively conceal content corresponding to their interests.

From the privacy perspective, lack of source and destination
addresses in NDN packets is a clear advantage over IP. In practice,
this means that the adversary that eavesdrops on a link close to a
content producer can not immediately identify the consumer(s) who
expressed interest in that content. Moreover, two features of
standard NDN routers: (1) content caching and (2) collapsing of
redundant interests, reduce the utility of eavesdropping near a
content producer since not all interests for the same content
reach its producer.

On the other hand, NDN provides no protection against an adversary
that monitors local activity of a specific consumer. As most
content names are expected to be semantically relevant to content
itself, interests can leak a lot of information about the content
they aim to retrieve. To mitigate this issue, NDN allows the use of ``encrypted names'', whereby a producer encrypts the tail-end
(a few components) of a
name~\cite{Jacobson2009}.
\footnote{For example, a name such as: {\tt
/ndn/xerox/parc/Alice/family/photos/Hawaii} might be
replaced with {\tt /ndn/xerox/parc/Alice/\textbf{encrypted-part}}.}
However, this simple approach does not provide much privacy: the
adversary can link multiple interests for the same content -- or
those sharing the same name prefix -- issued by different consumers.
Moreover, an adversary can always replay an interest to see what
(possibly cached) content it returns, even if a name of content is
not semantically relevant.

\section{Related Work}
\label{sec:relwork}
The goal of anonymizing tools and techniques is to decouple actions from entities
that perform them. The most basic approach to anonymity is
to use a trusted anonymizing proxy. A proxy is typically interposed
between a sender and a receiver in order to hide identity of the former
from the latter. The Anonymizer \cite{anonymizer} and Lucent
Personalized Web Assistant \cite{gabber99} are examples of
this approach. While relatively efficient, it is
susceptible to a (local) passive adversary that monitors all
proxy activity. Also, a centralized proxy necessitates centralized (global)
trust and represents a single point of  failure.

A more sophisticated decentralized approach is used in mix networks
\cite{mixnets}. Typically, a mix network achieves anonymity by
repeatedly routing a message from one proxy to another, such that the
message gradually loses any relationship with its originator.
Messages must be made unintelligible to potentially untrusted
intermediate nodes. Chaum's initial proposal
\cite{mixnets} defines an anonymous email system, wherein a sender
envelops a message with several concentric layers of public key encryption. The
resulting message is then forwarded to a sequence of {\em mix}
servers, that gradually remove one layer of encryption at a time and
forward the message to the next mix server.

Subsequent research generally falls into two classes:  delay-tolerant
applications (e.g. email, file sharing) and real-time  or low-latency
applications (e.g. web browsing, VoIP, SSH). These two classes
achieve different tradeoffs between performance (in terms of latency
and bandwidth) and anonymity. For example, Babel \cite{babel},
Mixmaster \cite{mixmaster} and Mixminion \cite{mixminion} belong to
the first category. Their goal is to provide anonymity with respect
to the {\em global eavesdropper} adversary. Each mix
introduces spurious traffic and randomized traffic delays in order
to inhibit correlation between  input and output traffic. However,
unpredictable traffic characteristics and high delays make
these techniques unsuitable for many applications.

Low-latency anonymizing networks are at the other end of the
spectrum. They try to minimize extra latency by forwarding traffic as
fast as possible. Because of this, strategies used in anonymization
of delay-tolerant traffic -- batching (delaying) and re-ordering of
traffic in mixes, as well as  introduction of decoy traffic --- are
generally not applicable. For example, \cite{serjantov2003} shows how
traffic patterns can be used for de-anonymization in low-latency
anonymity systems. Notable low-latency tools are summarized below.

Crowds \cite{crowds} is a low-latency anonymizing network for HTTP
traffic. It differs from traditional mix-based approaches as it lacks
layered encryption. For each message it receives, an anonymizer
probabilistically chooses to either forward it to a random next hop
within the Crowds network or deliver  it to its final destination.
Since messages are not encrypted, Crowds is vulnerable to local
eavesdroppers and predecessor attacks \cite{wright2004predecessor}.

Morphmix \cite{morphmix1,morphmix2} is a fully distributed
peer-to-peer mix network that uses layered encryption. Unlike Crowds,
it does not require a lookup service to keep track of all
participating nodes. Senders selects the first anonymizer and each
anonymizer along an ``anonymous tunnel'' picks the next hop to
dynamically build tunnels. Tarzan \cite{tarzan} is another fully
distributed peer-to-peer mix network. It builds a universally
verifiable set of neighbors (called mimics) for every node to keep
track of other other Tarzan participants. Every node selects its
mimics pseudo-randomly.

Tor \cite{tor} is the best-known and most-used low-latency
anonymizing tool. It is based on onion routing and layered
encryption. Tor uses a central directory to locate participating
nodes and requires users to build a three-hop anonymizing circuit by
choosing three random nodes. The first is called the {\em guard}, the
second -- the {\em middle}, and the third --- {\em exit} node. Once
set up, each circuit in Tor lasts about 10 minutes. For better
performance, bandwidth available to nodes is taken into account
during circuit establishment and multiple TCP connections are
multiplexed over one circuit. Communication between Tor nodes is
secured via SSL.  However, Tor does not introduce any decoy traffic
or randomization to hide traffic patterns. Another anonymization tool, I2P
\cite{I2P}, adopts many ideas of Tor, while using a distributed
untrusted directory service to keep track of its participants. I2P
also replaces Tor's circuit-switching operation with packet-switching
to achieve better load balancing and fault-tolerance.

A consumer privacy technique for Information-Centric Networks (ICNs) 
is proposed in \cite{Arianfar11}. Instead of using encryption,
it leverages cooperation from content producers and requires them
to mix sensitive information with so-called ``cover'' content. 
This approach requires producers to cooperate and store a
large amount of cover traffic. It also does not provide consumer-producer
unlinkability or protection against malicious producers.

Telex \cite{Wustrow11} is an alternative to mix networks designed to 
evade state-level censorship. It 
uses steganographic techniques to hide messages in SSL handshakes.
Users connect to innocuous-looking unblocked websites through SSL.
Sympathetic ISP-s that forward user's traffic recover hidden
messages and deliver them to the intended destination. While novel,
this approach presents significant deployment challenges and requires
support from the network infrastructure. Furthermore, the threat
model in Telex is quite different from that of the 
other anonymizing tools presented above.
Moreover, established TCP fingerprinting techniques can easily detect
differences between a Telex station and a censored website.
Another analogous technique -- called  Cirripede \cite{cirripede} -- was recently
proposed.

\section{AND$\overline{\mbox{a}}$NA}

\label{sec:andora-descr}

\ANON{} is a onion routing overlay network, built on top of NDN, that
provides privacy and anonymity to consumers. In particular, \ANON{}
prevents adversaries from linking consumers with the content they are
retrieving. Following the terminology introduced in \cite{crowds},
\ANON{} provides {\em beyond suspicion}\footnote{For any packet observed by 
the adversary, an entity is considered {\em beyond suspicion} if it is as likely to be the 
sender of this packet as any other entity.} degree of anonymity to its users.

\ANON{} uses multiple concentric layers of encryption and routes
messages from consumers through a chain of at least two onion
routers. Each router removes a layer of encryption and forwards the
decrypted messages to the next hop. Due to its low-latency focus,
\ANON{} does not guarantee privacy in presence of a global
eavesdropper. However, since it is geared for a world-wide (or at
least geographically distributed) network spanning a multitude of
administrative domains, the existence of such an adversary is
unlikely. For this reason, we restrict the adversarial capabilities
to eavesdropping on, injecting, removing or modifying messages on a
subset of available links. An adversary can compromise NDN routers
and \ANON{} nodes at will. Nonetheless, consumers benefit from
anonymity as long as they use at least one non-compromised \ANON{}
node. Details of our adversarial model and formal privacy guarantees
are discussed in  Section \ref{sec:security_analysis}.

\subsection{Design}
We now present two techniques --- {\em asymmetric} and {\em
session-based} --- that provide privacy and anonymity for NDN
traffic. %
Traffic is routed through {\em ephemeral circuits}, that are defined
as a pair of distinct anonymizing routers (ARs).  An AR is a NDN node
(e.g. a router or a host) that chooses to be part of \ANON. An ephemeral circuit
transports only one (or only a few) encrypted interest(s). It disappears either
when the corresponding content gets delivered, or after a
short timeout (hence ``{\em ephemeral}''). A timeout interval is
needed so that the consumer can re-issue the same encrypted interest
in case of packet loss. We refer to the first AR as {\em entry
router} and the second -- as {\em exit router}. They must not belong
to the same administrative domain and must not share the same name prefix.
Optionally, consumers can select ARs according to some parameters,
such as advertised bandwidth, availability or average load.
As pointed out in \cite{bauer2007low,MurdochW08}, there is a well know
natural tension between non-uniform (i.e. performance-driven) choice
of routers and anonymity. Consumers should consider this when
selecting ARs.

To build an ephemeral circuit, a consumer retrieves the list of ARs
and corresponding public keys. Although we do not mandate any
particular technique, a consumer can retrieve this list using, e.g.,
a directory service \cite{tor} or a decentralized (peer-to-peer)
mechanism.
AR public keys can be authenticated using
decentralized techniques (such as web-of-trust \cite{abd97}) or a PKI
infrastructure.\footnote{Note that implicit replication implemented
through caching allows the construction of a directory system with
better resilience against denial-of-service (DoS) attacks than IP.}

A prospective AR joins \ANON{} by advertising its public key,
together with its identity defined as: namespace, organization and
public key fingerprint. An AR also publishes auxiliary  information,
such as total bandwidth, average load, and uptime.

As mentioned earlier, both interest and content packets leak
information. Even if names in interests are hidden, three components
of content packets --- signatures, names and content itself ---
contain potentially sensitive information. Of course, content
producers could simply generate a new key-pair to sign each content
packet. This would be impractical, since high costs of key generation
and distribution would make it difficult for consumers to
authenticate content. (Note that key-evolving schemes
\cite{BellareM99} do not help, since verification keys generally
evolve in a way that is predictable to all parties, including the
adversary.)
Alternatively, the original content signature could be replaced with
that generated by an AR. However, this would preclude end-to-end
content verifiability and thus break the NDN trust model.

For this reason, \ANON{} implements encrypted encapsulation of
original content, using two symmetric keys securely distributed by
the consumer to the ARs during setup of the ephemeral circuit. Upon
receiving a content packet, the exit router encrypts it, together
with the original (cleartext) name and signature, under the first key
provided by the consumer. Then, treating the ciphertext as payload
for a new content packet, the exit router signs and sends it to
the entry router. The latter strips this signature and the name
and encrypts the remaining ciphertext under the second symmetric key
provided by the consumer. Next, it forwards the ciphertext with the
original encrypted name and a fresh (its own) signature. After
decrypting the payload, the consumer discards the signature from the
entry router and verifies the one from the content producer.

Because decryption is deterministic, an encrypted interest sent to an
AR always produces the same output. Since ARs are a public resource,
the adversary can use them to decrypt previously observed interests.
It can thus observe the corresponding output and correlate
incoming/outgoing interests. This is a well-known attack and there
are several ways to mitigate it, such as encrypted channels between
communicating parties \cite{tor} and mixing (for delay-tolerant
traffic) \cite{babel}. However, such techniques tend to have
significant impact on computational costs and latency. Instead, we use
standard NDN features of interest aggregation and caching to prevent
such attacks, as described next.

In NDN, a router (not just an AR) that receives duplicate interests
collapses them. An interest is considered a {\em duplicate}, if it
arrives while another interest referring to the same content has not
been satisfied. Also, if the original interest has been satisfied
and the corresponding content is still in cache, a new interest
requesting the same piece of data is satisfied with cached content.
In this case, the router does not forward any interests. Therefore,
the adversary must wait for the expiration of cached content.

As part of \ANON, the consumer includes its current timestamp within
each encryption layer. ARs reject interests with timestamps outside a
pre-defined time window. Thus, consumers need to be loosely synchronized
with ARs that must reserve at least $(rate \times window)$ of
cache, where $rate$ is the router's wire-rate and $window$ is the
interval within which interests are accepted. In this way, if an
interest is received multiple times by an AR (e.g.~in case  of loss
of the corresponding data packet between the AR and the consumer),
the AR is able to satisfy it using its cache.

The encryption algorithm used by consumers to conceal names in
interests must be secure against adaptive chosen ciphertext (CCA)
attacks.\footnote{Technically, in order to guarantee correctness an encryption scheme suitable for \ANONSMALL{} must also be robust \cite{AbdallaBN10}. However, since CCA-secure encryption schemes used in practice  are also robust, we omit this requirement in the rest of the paper.}
CCA-security \cite{BellareN08} implies, among other things,
probabilistic encryption and non-malleability. The former prevents
the adversary from determining whether two encrypted interests
correspond to the same unencrypted interest. Whereas, the latter
implies that the adversary cannot modify interests to defeat the
mechanism described above.

We now describe two flavors of anonymization protocols:
asymmetric and session-based. In order to allow efficient routing of
interest packets, the encrypted component is encoded at the end of
the name with both flavors.

\medskip
\noindent {\bf Asymmetric:} To issue an interest, a consumer
selects a pair of ARs and uses their public keys to encrypt
the interest, as described above and in Algorithm
\ref{alg:onion_user}. A consumer also generates two symmetric keys:
$k_1$ and $k_2$ that will be used to encrypt the content packet on
the way back. We use $\mathcal{E}_{pk}(\cdot)$ and
$\overline{\mathcal{E}_k}(\cdot)$ to denote (CCA-secure) public key and
symmetric encryption schemes, respectively.

To account for the delay due to extra hops needed to reach the second
AR (and reduce the number of discarded interests), a consumer adds
half of the estimated round trip time (RTT) to the innermost
timestamp. Each AR removes the outermost encryption layer, as
detailed in Algorithm \ref{alg:onion_router_int}. Since
$\mathcal{E}_{pk}(\cdot)$ is CCA-secure, the decryption process fails
if the ciphertext has been modified in transit or was not encrypted
under the AR's public key. Content corresponding to the encrypted
interest is encrypted on the way back, as detailed in Algorithm
\ref{alg:onion_router_data}, using $\overline{\mathcal{E}_k}(\cdot)$
and symmetric keys supplied by the consumer.

\begin{algorithm}[t!]
\footnotesize
\caption{Encrypted Interest
Generation}\label{alg:onion_user}
\SetKwInOut{Input}{input}\SetKwInOut{Output}{output}
\Input{Interest $\INT$; \ \  Set of $\ell$ ARs and their keys: $\mathcal{R} =
\{(\mbox{AR}_i, pk_i)\;|\; 0<i\leq~\ell\;, pk_i\in \mathcal{PK}\}$}
\Output{Encrypted interest $\INT_{pk_i, pk_j}$; \ symmetric keys $k_1,  k_2$}
\begin{algorithmic}[1]
\STATE Select $(\mbox{AR}_i, pk_i),(\mbox{AR}_j, pk_j)$ from $\mathcal{R}$
\IF{AR$_i = $ AR$_j$ \OR AR$_i$, AR$_j$ are from same
organization \OR \\ AR$_i$, AR$_j$ share the same name prefix}
\STATE Go to line 1
\ENDIF
\STATE $k_1 \leftarrow \{0,1\}^\kappa$ ; \ \ $k_2 \leftarrow \{0,1\}^\kappa$
\STATE $eint = \mbox{AR}_2/\mathcal{E}_{pk_j}(\INT \ | \ k_2 \ | \ curr\_timestamp + RTT/2)$
\STATE $eint = \mbox{AR}_1/\mathcal{E}_{pk_i}(eint \ | \ k_1 \ | \ curr\_timestamp)$
\STATE Output $eint, k_1, k_2$
\end{algorithmic}
\end{algorithm}

\begin{algorithm}[t!]
\footnotesize \caption{AR Handling of Encrypted
Interests}\label{alg:onion_router_int}
\SetKwInOut{Input}{input}\SetKwInOut{Output}{output}
\Input{Encrypted Interest $\INT_{pk_i, pk_j}$, where $pk_i,pk_j \in \mathcal{PK}\  \cup \{\perp\}$ \ \ \
(where ``$\perp$'' denotes ``no encryption'')}

\Output{Interest $\INT_{pk_j}; \ $ symmetric key $k_1$}

\begin{algorithmic}[1]
\footnotesize \STATE $(\INT_{pk_j}, k_1, timestamp) =
\mathcal{D}_{sk_i}(\INT_{pk_i, pk_j})$ \IF{Step 1 fails \OR
$timestamp$ is not current} \STATE Discard $\INT_{pk_i, pk_j}$ \ELSE
\STATE Save tuple  $(\INT_{pk_i, pk_j},\INT_{pk_j}, k_1)$ to internal state
\STATE Output
$\INT_{pk_j}, k_1$ \ENDIF
\end{algorithmic}
\end{algorithm}

\begin{algorithm}[t!]
\footnotesize \caption{AR Content
Routing}\label{alg:onion_router_data}
\SetKwInOut{Input}{input}\SetKwInOut{Output}{output}
\Input{Content: $data_{k_2}$ in response to $\INT_{pk_j}$, where  $pk_j \in \mathcal{PK}\  \cup \{\perp\}$}
\Output{Encrypted data packet $data_{k_1, k_2}$}
\begin{algorithmic}[1]
\footnotesize
\STATE Retrieve tuple $(\INT_{pk_i, pk_j},\INT_{pk_j}, k_1)$ from internal state\\ where name in $\INT_{pk_2}$ matches that in $data_{k_2}$
\STATE {\bf if} $k_2 \neq\ \perp$ {\bf then} Remove signature and name from $data_{k_2}$
\STATE Create new empty data packet $pkt$
\STATE Set name on $pkt$ as the name on $\INT_{pk_i, pk_j}$
\STATE Set the data in $pkt$ as $\overline {\mathcal{E}_{k_1}}(data_{k_2})$
\STATE Sign $pkt$ with AR's key
\STATE Output $pkt$ as $data_{k_1, k_2}$
\end{algorithmic}
\end{algorithm}

\medskip
\noindent {\bf Session-based Variant. } This variant aims to reduce
(amortize) the use of public key encryption thus lowering the
computational cost and ciphertext size. Before sending any interests
through ephemeral circuits, a consumer (Alice) establishes a shared
secret key with each selected AR. This is done via a 2-packet
interest/content handshake. We do not describe the details of
symmetric key setup, since there are standard ways of doing it.
We provide two options: one using Diffie-Hellman key exchange
\cite{diffie1976new}, and the other -- using SSL/TLS-style protocol
whereby Alice encrypts a key for $AR_i$. Once a symmetric key
$k_{ai}$ is shared with $AR_i$, Alice can establish any number of
ephemeral circuits using it as either first or second AR hop. Also at
setup time, Alice and $AR_i$ agree on session identifier value --
$sid_{ai}$ -- that is included (in cleartext) in subsequent interests
so that $AR_i$ can identify the appropriate entry for Alice and
$k_{ai}$.

The main advantage of the session-based approach is better
performance: both consumers and routers only perform symmetric
operations after initial key setup. However, one drawback is that,
since the session identifier $sid$ is not encrypted, packets
corresponding to the same $sid$ are easily linkable.

We note that our design neither encourages nor prevents consumers
from mixing asymmetric {\em and} session-based variants for the same
or different ephemeral circuits.

\subsection{System and Security Model}
In order for our discussion to relate to prior work, we
use the notion of ``indistinguishable configurations'' from the framework
introduced in \cite{feigenbaum2007model}; the actual definitions are
in Section \ref{sec:security_analysis}.

Our security analysis considers the worst case scenario, i.e.,
interests being satisfied by the content producer rather than a
router's cache. While, in normal conditions, encrypted interests are
satisfied by caches only in case of packet loss, fully decrypted
interests may not have to reach to content producers. A system secure
in case of cache misses is also secure when interests are satisfied
by content cached at routers along the way. (Recall that,
when an interest is satisfied by a router's
cache, it is not forwarded any further.) This limits the adversary's
ability to observe interests in transit.

\medskip
\noindent {\bf Adversary Goals and Capabilities. } The goal of an
adversary is to link consumers with their actions. In particular, it
may want to determine what content is being requested by a particular
user and/or which users are requesting specific content. A somewhat
related goal is determining which cache (if any) is satisfying a
consumer's requests.
Our adversary is local and active: it controls only a subset of
network entities and can perform any action usually allowed to such
entities. Moreover, it is capable of selectively compromising
additional network entities according to its local information.

Our model allows the adversary to perform the following actions:
\begin{compactitem}
\item {\bf Deploy compromised routers}: \ANON{} is an open network,
    therefore an adversary can deploy compromised anonymizers and
    regular routers. As such, routers may exhibit malicious behavior
    including injection, delay, alteration, or drop traffic.
\item {\bf Compromise existing routers}: An adversary can select any
    router (either ARs or regular routers) in the network and
    compromise it. As a result, the adversary learns all the private
    information (e.g.~decryption keys, pending decrypted interests,
    cache content, etc.) of such router.
\item {\bf Control content producers}: Content producers are not part
    of  \ANON{}. As such, the network has no control over them. An
    adversary can compromise existing content producers or deploy
    compromised ones and convince users to pull content from them. We
    also assume that the content providers are publicly accessible,
    and therefore the adversary is able to retrieve content from
    them.
\item {\bf Deploy compromised caches}: Similarly to compromised
    content producers, an adversary can compromise routers' cache or
    deploy its own caches. The behavior of a compromised cache
    includes monitoring cache requests and replying with corrupted
    data.
\item {\bf Observe and replay traffic}: An adversary can tap a link
    carrying anonymized traffic. By doing this it learns, among other
    things, packet contents and traffic patterns. The traffic
    observed by an adversary can be replayed by any compromised
    router.

\end{compactitem}
\medskip
An adversary can iteratively compromise entities of its choice, and
use the information it gathers to determine what should be
compromised next. In order to make our model realistic, the time
required by an adversary to compromise or deploy a router, a cache or
a content producer is significantly higher that the round-trip time
(RTT) of an anonymized interest and corresponding data. This implies
that all the state information recovered from a newly compromised
router only refers to packets received {\em after} the adversary
decides to compromise such router.

A powerful class of attacks against anonymizing networks is called
fingerprinting~\cite{mittalstealthy,shmatikov2006timing}.
Inter-packet time intervals are usually not hidden in low latency
onion routing networks because packets are dispatched as quickly as
possible. This behavior can be exploited by an adversary, who can
correlate inter-packet intervals on two links and use this
information to determine if the observed packets belong to the same
consumer~\cite{shmatikov2006timing}. This class of attacks is
significantly harder to execute on \ANON{} because of the nature of
ephemeral circuits and because of the use of caches on routers.
Ephemeral circuits do not allow the adversary to gather enough
packets with uniform delays since they are used to transport only one
or a very small number of interests and corresponding data. Active
adversaries who can control the communication link of a content
provider can add measurable delays to some of the packets in order to
identify consumers. However, consumers may be able to retrieve the
same content through caches making such attack ineffective.
Throughput fingerprinting consists in measuring the throughput of the
circuit used by a consumer to identify the slowest anonymizer in the
consumer's circuit \cite{mittalstealthy}. Throughput fingerprinting
is difficult to perform in \ANON{} since each ephemeral circuit does
not carry enough information to mount an attack. In particular, the
authors of \cite{mittalstealthy} report that a successful attack
requires at least a few minutes of traffic on Tor. Similarly,
ephemeral circuits provide an effective protection against known
attacks such as the predecessor attack \cite{wright2004predecessor}.

\smallskip
\noindent {\bf Consumers, Producers and ARs. } Each consumer runs
several processes that generate interests. For our analysis,
interests are created by a specific interface of a host, and the
corresponding content is delivered back to the same interface.
Interest encryption is either performed on the consumer's host, or on
an entity that routes consumer's traffic. In the latter case, the
channel between the user and the anonymizing entity is considered
secure.

Content is generated by producers, i.e., entities that can sign data.
We do not assume the correspondence between a producer and a
particular host. Content can be either stored in routers' caches, at
servers or dynamically generated in response to an interest.

ARs perform interests decryption and content encapsulation. Each AR
advertises a public key for signature verification and one or more
public keys for encryption. ARs must refresh their encryption keys
frequently, discarding old keys after a short grace period. In order
to simplify key distribution and allow consumer to immediately trust
new public keys from routers, we use a simple key hierarchy where a
long lived public key owned by the router (the signing key), is used
to certify short lived encryption keys. The signing key may be
certified by other entities using techniques like web-of-trust or
PKI.

\smallskip
\noindent {\bf Denial-of-service Attacks. } \ANON{} is envisioned as
a public overlay network and is clearly susceptible to DoS attacks.
Since anyone can join \ANON{} as an AR or use it as a consumer, we
make no distinction between insider and outsider attacks. The
adversary can send numerous interests to ARs or construct ephemeral
circuits longer than two hops in order to maximize effectiveness of
attacks. Moreover, it can consume AR resources by sending malformed
encrypted interests that require ARs to perform expensive and
ultimately useless public key decryption. Similar to Tor,
before establishing an ephemeral circuit, an AR can ask a consumer to
solve an easy-to-verify/expensive-to-solve puzzle. This and similar
techniques for \ANON{} are subjects of future work.
In a setting with long-lived circuits, such as Tor, disrupting a
node effectively shuts down all circuits that include it. Due to the
short lifespan of our ephemeral circuits, the same attack on \ANON{} only
causes a very small number of interests/data packets per user to be  dropped.

\smallskip
\noindent {\bf Abuse. } Similar to any other anonymity service,
\ANON{} can be  abused for a variety of nefarious purposes. We do not
elaborate on this topic. However, exit policies similar to those in
Tor \cite{tor} can be used with \ANON{} based on content names.

\newcommand{\adv}{{Adv}}
\newcommand{\ent}{d}

\section{Security Analysis}
\label{sec:security_analysis}
In this section we propose a formal model for evaluating the security
of \ANON{}. We define consumer anonymity and unlinkability with
respect to an adversary within this model. We finally provide
necessary and sufficient conditions for anonymity and unlinkability.
As our analysis shows, we are able to obtain a level of anonymity
comparable to Tor with two --- rather than Tor's three --- ARs thanks to
the lack of source addresses in NDN interests.

In general, efficacy of \ANON{} depends on the inability of the
adversary to correlate input and output of a non-compromised AR, and
its inability to observe all producer and consumers at the same time.
Since \ANON{} is designed for low-latency traffic, we do not
intentionally delay messages or introduce dummy packets, other than
some limited padding. This is similar to how Tor and other
low-latency anonymizing networks forward traffic, and implies that
traffic patterns remain almost unchanged as they pass through the
network~\cite{MurdochW08}. It is well known that, in Tor, this allows
the adversary that observes both ends of a communication flow to
confirm a suspected link between them~\cite{bauer2007low,
overlier2006locating}. For this reason, a {\em global passive
adversary} can violate anonymity properties of both Tor and \ANON{}.
However, we believe that such an adversary is unrealistic in 
a geographically distributed network spanning over multiple administrative domains, and designing against it would result in overkill.

We assume that any adversary monitoring all interfaces of an AR can
correlate entering encrypted traffic with its exiting, decrypted
counterpart using timing information. 
However, we believe that the short lifespan of ephemeral circuits -- and therefore
the limited number of related packets traveling through a single AR -- 
severely limits the adversary's ability to carry out this
attack. Unfortunately, at the time of this writing we do not have
enough experimental evidence to confirm this. For the sake of safety,
in the analysis below we assume that, by compromising all interfaces
of an AR, the adversary also compromises the AR itself. Therefore, a non-compromised AR must have at least one non-compromised interface.
To sum up, we assume that:
\begin{assumption}\label{assumption2}
$\adv$ cannot correlate input and output of a non-compromised AR.
\end{assumption}

Our analysis is based on {\em
indistinguishable configurations}. A configuration defines consumers'
activity with respect to a particular network. $\adv$ only controls a
subset of network entities and observes only some packets. Therefore,
it cannot distinguish between two configurations that vary only in
the activity that it cannot directly observe or in the content of
encrypted packets that it cannot decrypt.
In order to provide meaningful anonymity guarantees, we identify a
set of configurations that have one or more equivalent counterparts.
However, unlike \cite{feigenbaum2007model}, our analysis takes into
account the infrastructure underlying \ANON{}, i.e., the network
topology and packets exchanged over the {\em actual} network.
We believe that this makes our model and analysis both realistic and
fine-grained, since it accounts for all adversarial advantages
related to the underlying network structure. Packets sent by a
non-compromised consumer $u$  to a non-compromised AR $r$ transit
through several --- possibly compromised --- NDN routers that are not
part of \ANON. The model of \cite{feigenbaum2007model} considers $r$ compromised
even if only one link between $u$ and $r$ is controlled by the
adversary. Whereas, in our model, $r$ is considered to be
non-compromised.

\medskip

\subsection*{Notation and Definitions}
Table \ref{tab:notation} summarizes our notation. The intersection of
${\sf P}$ and ${\sf C}$ might not be empty, which reflects the fact
that consumers can act as producers and {\em vice versa}. Similarly,
our model does not prevent routers  from being producers and/or
consumers. Therefore, ${\sf R} \cap {\sf P}$ and ${\sf R} \cap {\sf
C}$ might be non-empty.

\begin{table*}[t!]%
\begin{center} \footnotesize
\begin{tabular}{|r|l||r|l|}
\hline
${\sf C}$  & set of all consumers, $u \in {\sf C}$ & $\adv$\ & adversary
\\ \hline
${\sf P}$  & set of all content producers, $p \in {\sf P}$ & $\ent$\ &  an entity, i.e., a router or a host
\\ \hline
${\sf R}$ &  set of all routers, $r \in {\sf R}$ & $\ent\ \rightarrow_\INT r$ & entity $\ent$\ sends interest $\INT$\ to some interface of router $r$
\\ \hline
${\sf IF}$ &  set of all interfaces on all network devices & $\INT \leadsto {\sf if}^r_i$ &  router $r$ receives interest $\INT$ on interface ${\sf if}^r_i$
\\ \hline
${\sf if}^r_i \in {\sf IF}$   &	 $i$-th interface on router $r$ & $\mathcal{E}_{pk}(\cdot)$ & CCA-secure hybrid encryption scheme
\\ \hline
$\mathcal{PK}$ & set of all public keys & $\INT_{pk_1, pk_2}$ & interest encrypted under public keys $pk_1, pk_2$

\\ \hline
$(pk_i, sk_i)$  &  public/priv.~encryption keypair of an AR $r_i$ & $\perp$ & no encryption
\\ \hline
\end{tabular}
\end{center}
\vspace{-0.5cm}
\caption{Notation.} %
\label{tab:notation}
\end{table*}%

The adversary is defined as a 4-tuple: $\adv=({\sf P}_{\adv}, {\sf
C}_{\adv}, {\sf R}_{\adv}, {\sf IF}_{\adv}) \subset ({\sf P}, {\sf
C}, {\sf R}, {\sf IF})$ where individual components specify
(respectively) sets of: compromised producers, consumers, routers and
interfaces. If $r\in {\sf R}_{\adv}$, then $\adv$ controls all
interfaces and has access to all decryption key and state information
of $r$. If all interfaces of $r$ are in ${\sf IF}_{\adv}$, then $r\in
{\sf R}_{\adv}$. In other words, for the sake of this analysis,
controlling all interfaces of a router is equivalent to learning that
router's decryption/secret key. We emphasize that for $r \in {\sf R}$ to be
non-compromised, at least one of its interfaces must be non-compromised.
If $p \in {\sf P}_{\adv}$, $\adv$ controls $p$'s interfaces, monitors
interests received by $p$ and controls both content and timing of
$p$'s responses to incoming interests. If $c \in {\sf C}_{\adv}$,
then $\adv$ controls all fields and timing of interests. Finally, if
${\sf if} \in {\sf IF}_{\adv}$, then $\adv$ can listen to all traffic
flowing through ${\sf if}$, as well as sending new traffic from it.
${\sf IF}_\adv$ includes all the interfaces
of compromised consumers, producers and routers {\em plus} additional
interfaces eavesdropped on by $\adv$.

For ease of notation, we do not explicitly indicate the name of the
next router in interest packets nor symmetric keys chosen by
consumers. We denote encrypted interests as:
$$
\INT_{pk_1, pk_2} = \mathcal{E}_{pk_1}(\mathcal{E}_{pk_2}(\INT))
$$
with $ pk_1, pk_2 \in \mathcal{PK} \ \cup \{\perp\}$
where $\perp$ indicates a special symbol for ``no encryption''.
If $pk_1=\ \perp$ then
$pk_2=\ \perp$. The size of public keys is a function of the global
security parameter $\kappa$. For simplicity, we denote $\INT_{pk_1,
\perp}$ as $\INT_{pk_1}$. When an AR receives $\INT_{pk_1, pk_2}$ and
if it is in possession of the decryption key corresponding to $pk_1$,
it removes the outer layer of encryption. While $\mathcal{E}$ is
CCA-secure (and therefore also CPA-secure), we do not require
$\mathcal{E}$ to be key private~\cite{BellareBDP01}. Key privacy
prevents an observer from learning the public key used to generate a
ciphertext. In \ANON{}, knowledge of the public key  used to encrypt
the outer layer of an interest does not reveal any more information
than the (cleartext) name on the interest.

We define the anonymity set with respect to interface ${\sf if}_i^r$ as:
$$
A_{{\sf if}_i^r} = \{ \ent \; |  \; \Pr\left[\ent\rightarrow_\INT r  \ |\  \INT \leadsto {\sf if}^r_i\right] > 0  \}
$$
In other words, for each interface ${\sf if}_i^r$ of router $r$,
$A_{{\sf if}_i^r}$ contains all entities that could have sent  $\INT$
with non-zero probability. We define ${\sf path}^\INT = \{ {\sf
if}^r_i \; | \;  \INT \leadsto  {\sf if}^r_i \}$. This is the
sequence of interfaces traversed by $\INT$. We use it to define the
anonymity set of an interest with respect $\adv$:
\[
A^\INT_{{Adv}} \triangleq \bigcap_{{\sf path}^\INT \cap{\sf IF}_{Adv}} A_{{\sf if}^r_i}
\]
Intuitively, if $u$ is far away from a compromised entity $\ent$,
then all sets
$A^\INT_{{Adv}}$ such that $u \in A^\INT_{{Adv}}$ are a large subset
of ${\sf C}$. $\adv$\ can rule out possible senders of an interest
(i.e., determine if $u \notin A^\INT_{{Adv}}$) only if it controls at
least one entity (routers, interfaces) along each path that $u$ does
not share with other consumers. The level of anonymity of $u\in
A^\INT_{{Adv}}$ with respect to $\adv$ is proportional to the size of
$A^\INT_{{Adv}}$. In particular, if $u$ is the only member of
$A^\INT_{{Adv}}$, it has no anonymity, since $\INT$ must have been
issued by $u$.

A configuration is a description of the network activity. Each
configuration maps consumers to their actions, defined as the
interest they issue and the corresponding content producers. More
formally, a configuration is a relation:
$$
C:{\sf C} \rightarrow \{(r_1, r_2, p, \INT_{pk_1, pk_2})\}
$$
with $(r_1, r_2, p, \INT_{pk_1, pk_2})\in {\sf R}^2\times {\sf P} \times \{0,1\}^*$, that maps a consumer to: a pair of routers defining an ephemeral
circuit, an interest (encrypted for this circuit) and a producer.
$C(u)$ is a 4-tuple that represents one action of $u$ in $C$. $C_i$
is the selection on the  $i$-th component of $C$, i.e., if $C(u) =
(r_1, r_2, p, \INT_{pk_1, pk_2})$, then $C_1(u) = r_1$, $C_2(u) =
r_2$, $C_3(u) = p$ and $C_4(u) = \INT_{pk_1, pk_2}$.

We say that two configurations $C$ and $C'$ are ``indistinguishable
with respect to $\adv$'' if $\adv$ can only determine with
probability at most $1/2+\varepsilon$ which configuration
corresponds to the observed network, for some $\varepsilon$
negligible in the security parameter $\kappa$. We denote two such
 configurations as $C\equiv_{Adv} C'$.

We now show that assumption \ref{assumption2} holds if a passive
adversary observes only input and output values of an AR (i.e., it
cannot use timing information or other side-channels), and the
underlying encryption scheme is semantically secure. Claim
\ref{claim1} below states that, for any encrypted interest, $\adv$
cannot determine if it corresponds to an interest decrypted by a
non-compromised router, by observing the two and with no additional information.

\begin{claim}\label{claim1}
Given any CPA-secure public key encryption scheme $\mathcal{E}$ and
two same-length interests $\INT^0, \INT^1$ chosen by $\adv$, $\adv$
has only negligible advantage over $1/2$ in determining the value of
a randomly selected bit $b$, given $\INT^b_{pk_1, pk_2}$,
$\INT^0_{pk_2}$ and $\INT^1_{pk_2}$, with $pk_1 \in \mathcal{PK}$ and
$pk_2 \in \mathcal{PK} \cup \{\perp\}$.
\end{claim}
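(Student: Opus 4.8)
The plan is to prove the claim by a direct reduction to the CPA-security of the outer encryption under $pk_1$, the key of the non-compromised router. The key observation is that the three values handed to $\adv$ are fully determined once the inner ciphertexts $c_0 = \mathcal{E}_{pk_2}(\INT^0)$ and $c_1 = \mathcal{E}_{pk_2}(\INT^1)$ are fixed: the challenge $\INT^b_{pk_1,pk_2}$ is simply $\mathcal{E}_{pk_1}(c_b)$, while $\INT^0_{pk_2}=c_0$ and $\INT^1_{pk_2}=c_1$ are revealed directly. Thus, from $\adv$'s viewpoint, the task is exactly to decide which of two known messages $c_0, c_1$ is encrypted inside a single ciphertext produced under $pk_1$ --- precisely the CPA indistinguishability experiment for $\mathcal{E}$.

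Concretely, I would build an adversary $\mathcal{B}$ against the CPA-security of $\mathcal{E}$. $\mathcal{B}$ receives the challenge public key, designates it $pk_1$, and runs $\adv$ on input $pk_1$ (and $pk_2$, which is public, or chosen by $\mathcal{B}$, when $pk_2 \neq \perp$). When $\adv$ outputs its two same-length interests $\INT^0, \INT^1$, $\mathcal{B}$ computes $c_0 = \mathcal{E}_{pk_2}(\INT^0)$ and $c_1 = \mathcal{E}_{pk_2}(\INT^1)$ itself (or sets $c_i = \INT^i$ when $pk_2 = \perp$), submits the pair $(c_0, c_1)$ to its own CPA challenger, and receives $\mathcal{E}_{pk_1}(c_b)$ for a hidden bit $b$. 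It then hands $\adv$ the tuple $(\mathcal{E}_{pk_1}(c_b), c_0, c_1)$ and outputs whatever bit $\adv$ guesses. Because $\mathcal{B}$ generates $c_0$ and $c_1$ and uses the very same values both as the challenge pair and as the side information given to $\adv$, the inner ciphertext buried under $pk_1$ is guaranteed to be consistent with the revealed $c_b$; hence the tuple $\mathcal{B}$ produces is distributed identically to the one in the real experiment of the claim. Consequently $\mathcal{B}$ wins its CPA game exactly when $\adv$ guesses $b$ correctly, so $\adv$'s advantage equals $\mathcal{B}$'s, which is negligible by CPA-security of $\mathcal{E}$, settling the claim.

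I expect the only delicate points to be bookkeeping rather than conceptual. First, I must verify that $c_0$ and $c_1$ are admissible CPA challenge messages, i.e. that they have equal length: since $\INT^0, \INT^1$ are chosen to be of equal length and $\mathcal{E}_{pk_2}$ (a standard hybrid scheme) maps equal-length plaintexts to equal-length ciphertexts, this holds, and in the degenerate case $pk_2 = \perp$ the messages are the equal-length interests themselves. Second, I should observe that $\adv$ may well possess the secret key for $pk_2$ (when $pk_2$ is a compromised router's key), yet this grants no advantage: decrypting $c_0, c_1$ only recovers $\INT^0, \INT^1$, which $\adv$ itself chose, so the reduction is unaffected and, crucially, never needs the secret key for $pk_1$ --- the one value $\adv$ cannot have, since the router is non-compromised.
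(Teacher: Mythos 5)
Your proposal is correct and is essentially the same reduction as the paper's: both designate the CPA challenger's key as the outer key $pk_1$, have the reduction itself compute $c_0=\mathcal{E}_{pk_2}(\INT^0)$ and $c_1=\mathcal{E}_{pk_2}(\INT^1)$, submit these as the challenge pair, and forward $(\mathcal{E}_{pk_1}(c_b),c_0,c_1)$ to $\adv$, outputting its guess. Your added remarks on equal ciphertext lengths, the $pk_2=\perp$ case, and $\adv$ possibly holding $sk_2$ are sound refinements the paper leaves implicit.
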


\noindent
Due to the lack of space, Claim \ref{claim1} is formally justified in
Appendix \ref{appendix_security_proofs}.

\subsection*{Anonymity Definitions and Conditions}

In this section we present formal definitions of anonymity for our
model. We introduce the notions of {\em consumer anonymity}, {\em
producer anonymity} and {\em producer and consumer unlinkability}. We
show that ephemeral circuits composed of two anonymizing routers ---
at least one of which is not compromised --- provide consumer {\em
and} producer anonymity. This, in turn, implies consumer and producer
unlinkability.
Due to the lack of space, we defer formal proofs of the theorems in
this section to Appendix \ref{appendix_security_proofs}.

A consumer $u$ enjoys {\em consumer anonymity} if $Adv$ cannot determine whether $u$ or a different user $u'$ is retrieving some specific content. This notion is formalized using indistinguishable configurations: given a configuration $C$ in which $u$ retrieves content $t$, $u$ has consumer anonymity if there exist another configuration $C'$ in which $u'$ retrieves $t$ and $\adv$ cannot determine whether he is observing $C$ or $C'$.
More formally:

\begin{definition}[Consumer anonymity]
$u \in ({\sf C} \setminus {\sf C}_\adv)$ has consumer anonymity in
configuration $C$ with respect to $\adv$\ if there exists
$C'\equiv_{\adv} C$ such that $C'(u') = C(u)$ and $u'\neq u$.
\end{definition}

\begin{theorem}\label{cons_anon}
$u \in ({\sf C} \setminus {\sf C}_\adv)$ has consumer anonymity in
$C$ with respect to ${Adv}$ if there exists $u'\neq u$ such that any of
the following conditions hold:
\begin{compactenum}
\item $u, u' \in A^{C_4(u)}_{{\adv}}$ 
\item $C_1(u) = C_1(u')$, $C_1(u) \notin {\sf R}_{\adv}$ and $C_1(u) \in
    A^{\INT_{pk_2}}_{{\adv}}$ where $C_4(u) = \INT_{pk_1, pk_2}$
    
\item $C_2(u) = C_2(u')$, $C_2(u) \notin {\sf R}_{\adv}$  and $C_2(u) \in
    A^\INT_{{\adv}}$ where $C_4(u) = \INT_{pk_1, pk_2}$
    
\end{compactenum}
\end{theorem}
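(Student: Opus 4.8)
The plan is to handle the three sufficient conditions separately, and in each case to exhibit an explicit witness configuration $C'$ and argue $C' \equiv_\adv C$. The natural witness is the \emph{swap}: let $C'$ agree with $C$ on every consumer except that $C'(u') = C(u)$ and $C'(u) = C(u')$. This immediately satisfies the syntactic requirement $C'(u') = C(u)$ of the definition, and it leaves the multiset of all actions (hence all packets ever injected into the network) unchanged, so the only thing that could distinguish $C$ from $C'$ is information binding a particular action to $u$ rather than to $u'$. The three conditions are exactly three disjoint ways of guaranteeing that $\adv$ holds no such binding information, and the body of each case is to check that $\adv$'s observation on \emph{every} interface in ${\sf IF}_\adv$ (together with its compromised producers and caches) has the same distribution, up to negligible $\varepsilon$, under $C$ and $C'$.

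For the first condition, $u, u' \in A^{C_4(u)}_\adv$, I would argue directly from the definition of the anonymity set. Membership of both entities in $A^{C_4(u)}_\adv = \bigcap_{{\sf path}^\INT \cap {\sf IF}_\adv} A_{{\sf if}^r_i}$ means that on every adversary-observed interface along the path of the encrypted interest $C_4(u)$, both $u$ and $u'$ are possible originators with nonzero probability; hence the configuration in which $u'$ originates $C_4(u)$ generates, on each such interface, an observation identically distributed to the one in $C$. Since the interests are encrypted under a CPA-secure scheme, the ciphertext itself carries no originator information, so the per-interface observations are unchanged and $C \equiv_\adv C'$.

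For conditions two and three I would lean on the shared honest router together with Assumption~\ref{assumption2} and Claim~\ref{claim1}. In condition two the entry router $r_1 = C_1(u) = C_1(u')$ is non-compromised, so by Assumption~\ref{assumption2} $\adv$ cannot correlate the two doubly-encrypted interests entering $r_1$ with the two once-decrypted interests $\INT_{pk_2}$ leaving it; Claim~\ref{claim1} makes this precise cryptographically, since from $\INT^b_{pk_1,pk_2}$ and both candidates $\INT^0_{pk_2}, \INT^1_{pk_2}$ the adversary has only negligible advantage in matching input to output. The extra hypothesis $C_1(u) \in A^{\INT_{pk_2}}_\adv$ guarantees that downstream of $r_1$ the decrypted interest cannot be traced back past $r_1$ to a unique consumer. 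Together these let me conclude that swapping the two consumers sharing $r_1$ is undetectable. Condition three is the mirror image: the exit router $r_2 = C_2(u) = C_2(u')$ plays the role of $r_1$, the relevant anonymity set is that of the fully-decrypted cleartext interest $\INT$ delivered to the producer, and the same combination of Assumption~\ref{assumption2} and Claim~\ref{claim1} (now applied to the inner encryption layer, and to the returning content encapsulated under the symmetric keys via $\overline{\mathcal{E}_k}(\cdot)$) decouples the producer-side observation from the consumer.

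I expect the main obstacle to be the rigorous distributional argument rather than the case split. Establishing $C \equiv_\adv C'$ requires a hybrid sequence showing that $\adv$'s \emph{entire} view --- across all of ${\sf IF}_\adv$, the compromised producers and caches, and both the interest and the returning (symmetrically re-encapsulated) content --- is indistinguishable under the two configurations, so I must cleanly separate the purely topological reasoning (anonymity-set membership, which is information-theoretic) from the cryptographic reasoning (Claim~\ref{claim1} plus CCA-security of $\overline{\mathcal{E}_k}(\cdot)$ on the reverse path) and then compose them so the negligible advantages sum to a single negligible $\varepsilon$. The most delicate point is verifying that the swap is consistent in \emph{both} directions: I must confirm that neither $\adv$'s view of $u'$'s reassigned action nor of $u$'s is disturbed, which in condition one may force me to strengthen the witness beyond a naive swap, and in conditions two and three is exactly where the honesty of the shared router and Assumption~\ref{assumption2} must be invoked with care.
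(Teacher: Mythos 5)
Your proposal matches the paper's own proof in all essentials: the witness configuration is the same swap $C'(u)=C(u')$, $C'(u')=C(u)$; condition~1 is handled by anonymity-set membership on every adversary-controlled interface along ${\sf path}^{C_4(u)}$ (you argue it directly where the paper argues the contrapositive, but the content is identical); and conditions~2 and~3 are reduced to Claim~\ref{claim1} together with Assumption~\ref{assumption2} applied to the shared non-compromised entry or exit router, exactly as in the paper's simulator-based reductions. The only additions beyond the paper's sketch are your remarks on the reverse (content) path and on composing the negligible advantages, which the paper leaves implicit.
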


\noindent
Informally, the theorem above states that \ANON{} provides consumer anonymity with respect to $\adv$ if: {\em 1.} $\adv$\ cannot observe encrypted interests coming from $u$ and $u'$, or it cannot distinguish between the two consumers due to anonymity provided by the network layer; or {\em 2.} $u, u'$ share an non-compromised first router in at least
one ephemeral circuit; or {\em 3.} $u,
u'$ share an non-compromised second router  in at least one
ephemeral circuit.

Similarly to consumer anonymity, producer anonymity is defined in terms of indistinguishable configurations. In particular, a producer $p$ enjoys anonymity with respect to $\adv$ which observes $\INT_{pk_1, pk_2}$ if $\adv$ cannot distinguish between a configuration $C$ where $p$ produces the content corresponding to $\INT$  and a configuration $C'$ where $p'$ and not $p$ produces {\em that} content.

\begin{definition}[Producer anonymity]
Given $\INT_{pk_1, pk_2}$ for $p\in{\sf P}$, $u \in {\sf C}$ has
producer anonymity in configuration $C$ with respect to $p, \adv$\ if
there exists an indistinguishable configuration $C'$ such that
$\INT_{pk_1, pk_2}$ is sent by a non-compromised consumer to a
producer different from $p$.
\end{definition}

\begin{theorem} \label{prod_anon}
$u$ has producer anonymity in $C$ with respect to $p, \adv$\ if any
of the following conditions hold:
\begin{compactenum}
\item There exists $C(u)$ such that $C_1(u)$ (the first anonymizing
    router) is not compromised and $C_4(u) = \INT_{pk_1, pk_2}$,
    $C_1(u) = C_1(u')$ and $C_3(u) = p \neq C_3(u')$ for some
    non-compromised $u'\in {\sf C}$, {\em or}
\item There exists $C(u)$ such that $C_2(u)$ (the second anonymizing
    router) is not compromised and $C_4(u) = \INT_{pk_1, pk_2}$,
    $C_2(u) = C_2(u')$ and $C_3(u) = p\neq C_3(u')$ for some
    non-compromised $u'\in {\sf C}$
\end{compactenum}
\end{theorem}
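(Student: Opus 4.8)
The plan is to unfold the definition of producer anonymity and, for each of the two conditions, exhibit an explicit configuration $C'\equiv_{\adv} C$ in which a non-compromised consumer sends an interest indistinguishable from $\INT_{pk_1,pk_2}$ to a producer $p'\neq p$. As in the proof of Theorem~\ref{cons_anon}, the two levers are Assumption~\ref{assumption2} (the adversary cannot correlate the input and output of a non-compromised AR) and Claim~\ref{claim1} (given a doubly-encrypted interest together with \emph{both} candidate singly-encrypted interests, the adversary has only negligible advantage in deciding which one it decrypts to). In each case the companion consumer $u'$ promised by the hypothesis --- sharing a non-compromised AR with $u$ but reaching $C_3(u')=p'\neq p$ --- is exactly the cover I would use: I reroute $u$'s request to $p'$ while rerouting $u'$'s to $p$, and argue the swap is invisible. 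I would then note that a hybrid over all such swapped interests lifts per-interest indistinguishability to $C\equiv_{\adv}C'$.

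For condition~1, let $r_1=C_1(u)=C_1(u')$ be the shared, non-compromised first router. In $C$ the interest $u$ injects into $r_1$ decrypts to a $p$-bound singly-encrypted interest, call it $A=\INT_{pk_2}$, while $u'$'s decrypts to a $p'$-bound one $B$. I would build $C'$ by exchanging the inner interests: in $C'$, $u$ emits $\mathcal{E}_{pk_1}(B)$ and $u'$ emits $\mathcal{E}_{pk_1}(A)$. Because both emitted ciphertexts are encrypted under $r_1$'s non-compromised key $pk_1$, Claim~\ref{claim1} --- which hands the adversary precisely the object on $u$'s link ($\mathcal{E}_{pk_1}(A)$ versus $\mathcal{E}_{pk_1}(B)$) \emph{and} both $A$ and $B$ as they emerge from $r_1$ --- shows the adversary cannot tell the two worlds apart even while observing $u$'s and $u'$'s links. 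The multiset of singly-encrypted interests leaving $r_1$ is $\{A,B\}$ in both configurations, so the set of producers contacted downstream is unchanged, and Assumption~\ref{assumption2} forbids correlating $r_1$'s inputs with its outputs. Hence $C\equiv_{\adv}C'$, and in $C'$ the non-compromised consumer $u$ sends an interest indistinguishable from $\INT_{pk_1,pk_2}$ to $p'\neq p$, as required.

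Condition~2 is handled symmetrically at the shared, non-compromised second router $r_2=C_2(u)=C_2(u')$. The only difference is that here a \emph{compromised} first router may have already peeled the outer layer, so the adversary can observe the singly-encrypted interests $\INT_{pk_2}$ entering $r_2$; but the relevant layer is now under $r_2$'s non-compromised key, and Claim~\ref{claim1} is stated exactly for this situation (it already gives the adversary the singly-encrypted candidates). I would construct $C'$ by exchanging the cleartext interests that $u$ and $u'$ drive through $r_2$, so that $u$'s request is served by $p'$ and $u'$'s by $p$; the set of plaintext interests leaving $r_2$ toward the producers is preserved, Assumption~\ref{assumption2} again blocks the in/out correlation across $r_2$, and any residual distinguishing advantage reduces to CPA-security of $\mathcal{E}$ via Claim~\ref{claim1}.

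The hard part will be discharging $C\equiv_{\adv}C'$ rigorously across \emph{every} observed interface rather than just at the shared AR. Three points deserve care. First, Claim~\ref{claim1} requires equal-length interests, so the swap presupposes that $u$'s and $u'$'s interests (and their returning content) are padded to a common length; I would fold this into the argument via the limited padding the protocol already uses. Second, and most delicate, is the return path: the content from $p$ and from $p'$ differs, but each packet is re-encapsulated under the consumer-supplied symmetric key $\overline{\mathcal{E}_{k_1}}(\cdot)$ and re-signed at the non-compromised AR, so I must invoke semantic security of the symmetric encapsulation (the consumer is non-compromised, so its keys are hidden) together with the fact that the fresh signature installed at the non-compromised AR erases all producer-specific bits, to conclude the returning packets are indistinguishable regardless of which producer actually served the request. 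Third, I must treat the case where the \emph{other} endpoint --- the non-shared router, or the producer $p$ itself --- is adversary-controlled; there the whole distinguishing advantage must still funnel through the non-compromised AR, so that any successful distinguisher yields either a correlation across a non-compromised AR (contradicting Assumption~\ref{assumption2}) or an attacker against $\mathcal{E}$ in the sense of Claim~\ref{claim1}.
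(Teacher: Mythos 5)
Your proposal is correct and follows essentially the same route as the paper's proof: for each condition you construct the indistinguishable configuration $C'$ by swapping the interests of $u$ and the cover consumer $u'$ at the shared non-compromised AR, then invoke Claim~\ref{claim1} to hide which outer ciphertext wraps which inner interest and Assumption~\ref{assumption2} to block input/output correlation at that AR. Your additional care about padding, the return path, and compromised endpoints goes beyond what the paper's sketch spells out, but it does not change the argument's structure.
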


\noindent
Finally, we define producer and consumer unlinkability as: 
\begin{definition}[Producer and consumer unlinkability]
We say that $u \in ({\sf C} \setminus {\sf C}_\adv)$ and $p \in {\sf
P}$ are unlinkable in $C$ with respect to $\adv$\ if there exists $C'
\equiv_\adv C$ where $u$'s interests are sent to a producer $p'\neq
p$.
\end{definition}

\begin{corollary}
Consumer $u \in ({\sf C} \setminus {\sf C}_\adv)$ and producer $p \in
{\sf P}$ are unlinkable in configuration $C$ with respect to $\adv$\
if $p$ has producer anonymity with respect to $u$'s interests {\em
or} $u$ has consumer anonymity and there exists a configuration
$C'\equiv_\adv C$ where $C'(u')=C(u)$ with $u'\neq u$ and $u'$'s
interests have a destination different from $p$.
\end{corollary}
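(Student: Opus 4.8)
The plan is to prove the corollary by directly exhibiting, in each of the two disjuncts, the indistinguishable configuration demanded by the definition of producer and consumer unlinkability: a witness $C' \equiv_\adv C$ in which $u$'s own interest is directed at some producer $p' \neq p$. Since both producer anonymity and consumer anonymity already guarantee the existence of indistinguishable configurations with prescribed structure, the corollary reduces to bookkeeping over which consumer is mapped to which producer in the witness; no new cryptographic or network-level argument is needed beyond Assumption \ref{assumption2} and Claim \ref{claim1}, which are already folded into the relation $\equiv_\adv$.

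First, suppose $p$ has producer anonymity with respect to $u$'s interests. Unfolding the definition of producer anonymity applied to $C_4(u) = \INT_{pk_1, pk_2}$ yields an indistinguishable configuration in which this same interest is routed to a producer $p' \neq p$. I would then observe that the witness furnished by Theorem \ref{prod_anon} is obtained by swapping only the producers (equivalently, the content destinations) of $u$ and a second non-compromised consumer $u'$ that shares a non-compromised AR with $u$, while leaving $u$'s circuit and encrypted interest untouched. Hence $u$ remains the sender of $\INT_{pk_1, pk_2}$ in the witness, and its destination is now $p' \neq p$; this is exactly the configuration required by the unlinkability definition.

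Second, suppose instead that $u$ has consumer anonymity and that the hypothesized configuration $C' \equiv_\adv C$ satisfies $C'(u') = C(u)$ for some $u' \neq u$ whose original destination $C_3(u')$ differs from $p$. Here I would use that the consumer-anonymity witness of Theorem \ref{cons_anon} is realized as a \emph{swap}: it sets $C'(u') = C(u)$ \emph{and} $C'(u) = C(u')$, so that $u$ in $C'$ inherits $u'$'s original action. Because $u'$'s destination in $C$ is $C_3(u') \neq p$, the destination of $u$ in $C'$ equals $C_3(u') \neq p$. Since $C' \equiv_\adv C$, this again satisfies the unlinkability definition, so the two disjuncts together establish the claim.

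The hard part will not be the logical skeleton, which is a short unwinding of definitions, but pinning down precisely \emph{who sends what} in the witness configuration. Both the producer- and consumer-anonymity definitions assert the existence of an indistinguishable configuration with a property about \emph{some} (possibly unnamed) consumer, whereas unlinkability insists that it be $u$'s \emph{own} interest that reaches a non-$p$ producer. I therefore expect the delicate step to be verifying that the swap constructions underlying Theorems \ref{prod_anon} and \ref{cons_anon} keep $u$ fixed as the sender (Case A) or relocate $u$ to $u'$'s former destination (Case B), and confirming that these reroutings disturb nothing the adversary can observe — i.e., that the constructed configuration is genuinely $\equiv_\adv C$ — which is exactly where Assumption \ref{assumption2}, that $\adv$ cannot correlate the input and output of the shared non-compromised AR, is invoked.
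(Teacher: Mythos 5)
The paper never writes out a proof of this corollary; it is stated as an immediate consequence of the definitions and of Theorems \ref{cons_anon} and \ref{prod_anon}, and your unwinding is essentially the intended argument. Your second case is right: the witnesses built in the proof of Theorem \ref{cons_anon} (conditions 2 and 3) are genuine swaps, $C'(u')=C(u)$ \emph{and} $C'(u)=C(u')$, so $u$'s destination in $C'$ is $C_3(u')\neq p$ and the unlinkability definition is met; you correctly flag that the bare hypothesis ``$C'(u')=C(u)$'' alone would not pin down $C'(u)$.

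The one inaccuracy is your description of the Case~A witness. The configuration actually constructed in the proof of Theorem \ref{prod_anon} does \emph{not} swap producers while leaving $u$'s encrypted interest untouched: it swaps the interests, setting $C'(u)=(C_1(u),C_2(u),C_3(u),C_4(u'))$ and $C'(u')=(C_1(u'),C_2(u'),C_3(u'),C_4(u))$, so that the packet $\INT_{pk_1,pk_2}$ is re-attributed to $u'$ and reaches $C_3(u')\neq p$, while $u$ in $C'$ still addresses $p$ with a different packet. Under that witness, ``$u$'s interests are sent to a producer $p'\neq p$'' has to be read as a statement about the observed packet $C_4(u)$, not about $u$'s action in $C'$. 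Your producer-swap variant matches the letter of the unlinkability definition more directly and is indistinguishable from $C$ by the same appeal to Claim \ref{claim1} and Assumption \ref{assumption2}, so the conclusion survives either way --- but you cannot simply cite ``the witness furnished by Theorem \ref{prod_anon}'' for it; you would have to argue its indistinguishability yourself, or else adopt the packet-level reading of the definition and use the theorem's witness as is. Given how informally the paper fixes these semantics, this is a presentational fix rather than a logical gap.
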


\begin{corollary}
Consumer $u \in ({\sf C} \setminus {\sf C}_\adv)$ and producer $p \in
{\sf P}$ are unlinkable in configuration $C$ with respect to $\adv$\
if both producer and consumer anonymity hold.
\end{corollary}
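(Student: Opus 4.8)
The plan is to reduce the statement to the preceding corollary and, for self-containedness, to exhibit the required indistinguishable configuration directly by composing the two configurations supplied by the anonymity hypotheses. The first fact I would establish is that $\equiv_\adv$ is an equivalence relation: it is defined by requiring $\adv$'s distinguishing advantage to be at most $1/2+\varepsilon$ with $\varepsilon$ negligible in $\kappa$, and a standard hybrid argument shows that composing two such indistinguishabilities preserves negligibility, hence transitivity. This transitivity is the only structural property I need in order to chain configurations together.

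Given transitivity, the quickest route is to observe that the hypothesis of this corollary strictly implies producer anonymity of $u$ with respect to $p$, which is exactly the first sufficient condition for unlinkability in the previous corollary; unlinkability then follows at once. To make the contribution of consumer anonymity explicit, I would instead argue as follows. Unwinding the definition, unlinkability of $u$ and $p$ amounts to producing a single configuration $C'\equiv_\adv C$ in which $u$'s own interest is routed to a producer $p'\neq p$. Producer anonymity yields a configuration $C_1\equiv_\adv C$ in which the interest $C_4(u)=\INT_{pk_1,pk_2}$ reaches some producer $p'\neq p$, issued by a non-compromised consumer. If that sender is already $u$, then $C'=C_1$ witnesses unlinkability and we are done.

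The remaining case is when the producer-anonymity witness attributes $\INT$ to a non-compromised consumer $u'\neq u$. Here consumer anonymity (Theorem \ref{cons_anon}) supplies a configuration in which the roles of $u$ and $u'$ as the issuer of $\INT$ are interchangeable without altering $\adv$'s view; applying this relabeling to $C_1$ produces $C'\equiv_\adv C_1$ in which $u$ is the sender while the destination remains $p'\neq p$, and transitivity gives $C'\equiv_\adv C$, as required. I expect the main obstacle to be precisely this step: one must guarantee that the sender-identity swap and the destination change hold \emph{simultaneously} in one configuration that is still indistinguishable from $C$. This should follow from Assumption \ref{assumption2} at the shared non-compromised AR produced by the conditions of Theorem \ref{prod_anon} — there $\adv$ cannot correlate the router's input with its output, so the interest entering that AR may be ascribed to $u$ just as well as to $u'$, and its onward branch may equally terminate at $p$ or at $p'$. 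The one place requiring care is checking that no compromised interface along the path shared by $u$ and $u'$ breaks this interchange, which is where the network-level bookkeeping of the anonymity sets $A^\INT_\adv$ must be verified.
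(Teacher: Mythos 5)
Your first route---observing that the hypothesis trivially implies the first disjunct of the preceding corollary and invoking that corollary---is exactly the one-line derivation the paper intends; the paper itself offers no proof of either corollary, treating both as immediate consequences of the definitions and of Theorems \ref{cons_anon} and \ref{prod_anon}. Your longer direct construction goes beyond the paper and is worth comparing. You correctly spot a definitional slippage that the paper glosses over: the witness configuration supplied by producer anonymity only guarantees that $\INT_{pk_1,pk_2}$ is sent by \emph{some} non-compromised consumer to a producer $p'\neq p$, whereas unlinkability demands a configuration in which \emph{$u$'s} interests go to $p'\neq p$. Your patch---using the consumer-anonymity swap to reattribute the interest to $u$ and chaining via transitivity of $\equiv_\adv$ (which does hold by the standard hybrid argument, at the cost of a factor of two in the negligible advantage)---is the right repair, and it is precisely the point where the hypothesis of \emph{both} anonymity properties earns its keep; producer anonymity alone would not close this case. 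The one step you flag as delicate is indeed the only soft spot: you need the conditions of Theorem \ref{cons_anon} to remain valid in the intermediate configuration $C_1$ rather than in $C$ itself. This does go through, because $C_1$ differs from $C$ only in the producer assignment downstream of the non-compromised AR guaranteed by Theorem \ref{prod_anon}, while the consumer-side data that Theorem \ref{cons_anon}'s conditions depend on (the shared entry or exit AR, its non-compromised status, and the anonymity sets $A^\INT_\adv$ along the consumer-to-AR paths) are untouched; Assumption \ref{assumption2} then decouples the sender swap from the destination change at that AR. So your argument is correct and in fact more careful than the paper's implicit one.
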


We emphasize that this result also holds for ephemeral circuits with
length greater than two ARs.%

\section{Implementation and Performance}
\label{sec:implem}
\ANON{} is implemented as an application-level service consisting of
client ``stack'' (used by consumers) and server
program that runs on \ANON{} ARs. Both are written in
C and interface to NDN through Unix domain
sockets.\footnote{At the time of this writing, there is no direct function interface to NDN}
Cryptographic algorithms are implemented using OpenSSL \cite{openssl}. Hybrid encryption is obtained using
RSA-OAEP \cite{RSA-OAEP} and AES+HMAC \cite{AES, HMAC}. The latter is also used for symmetric
encryption. We use SHA-256 for HMAC and 1024- and 128-bit keys for RSA and AES, respectively.
Loose time synchronization among \ANON{} client and servers are achieved using \texttt{\small pool.ntp.org}, a public pool of
NTP servers.

\ANON{} client encrypts interests from user applications. In order
to hide all possible sources of de-anonymizing information,
encryption is performed over the full interest packet,
including: name, scope, exclusion filters and duplicate
suppression string fields.
Following NDN ``rules'', \ANON{} AR announces the ability to serve the
root (``{\small \tt /}'') namespace and receives all traffic sent from (or to) the
local NDN routing process. This allows traffic to be routed
through \ANON{} by default, requiring no changes to existing applications.
For more granularity, consumers can vary the default namespace, e.g., ``{\tt \small /andana/}''.
However, this would require privacy-seeking applications to explicitly direct their traffic to that
namespace, similar to today's configurable proxy settings.

\ANON{} servers run as applications on NDN routers. Each server is
responsible for its {\em relay} and {\em session creation}
namespaces. The former is a globally routable namespace  used for
receiving both session-based and asymmetrically encrypted Interests.
Clients using session-based encryption in \ANON{} need to first establish symmetric keys with
servers. To start a new session with a server, a clients sends an interest in the {\tt \small
createsession} namespace, registered by the server code as a sub-prefix of the
relay namespace.
We deployed our prototype and run a series of tests on the Open Network Laboratory
(ONL) \cite{ONL}. ONL is a testbed developed by Washington University to enable
experimental evaluation of advanced networking concepts in a realistic environment.
To guarantee highly reproducible results, ONL provides reservation-based exclusive
access to most of its host and network resources. All our experiments used single-core Linux
machines with 512 MB of RAM and gigabit switches (one machine per switch).

We compare plain NDN and \ANON{} on a simple line topology with four switches and four Linux machines, each corresponding to an NDN node.
Static routing is established between nodes. The first NDN node in the line topology
acts as a consumer and runs {\small \tt ccngetfile} --- a small tool from CCNx open-source
library that retrieves data published as NDN content and stores it in a local file.
We performed tests with 1, 10, and 100MB files; each file was retrieved from the NDN
repository of the machine at the other end of the line topology. Results of this comparison
for 10MB files are summarized in Fig.~\ref{fig:andora}. Due to space constraints,
we illustrate all file retrieval results in Appendix \ref{appendix_performance}.
Results show that computational overhead introduced by \ANON{} roughly doubles download
times over plain NDN. This is assuming an almost-perfect world where ARs topologically
align with the best path and link bandwidths are abundant.
\begin{figure}[tb]
\centering
\includegraphics[width=.49\textwidth]{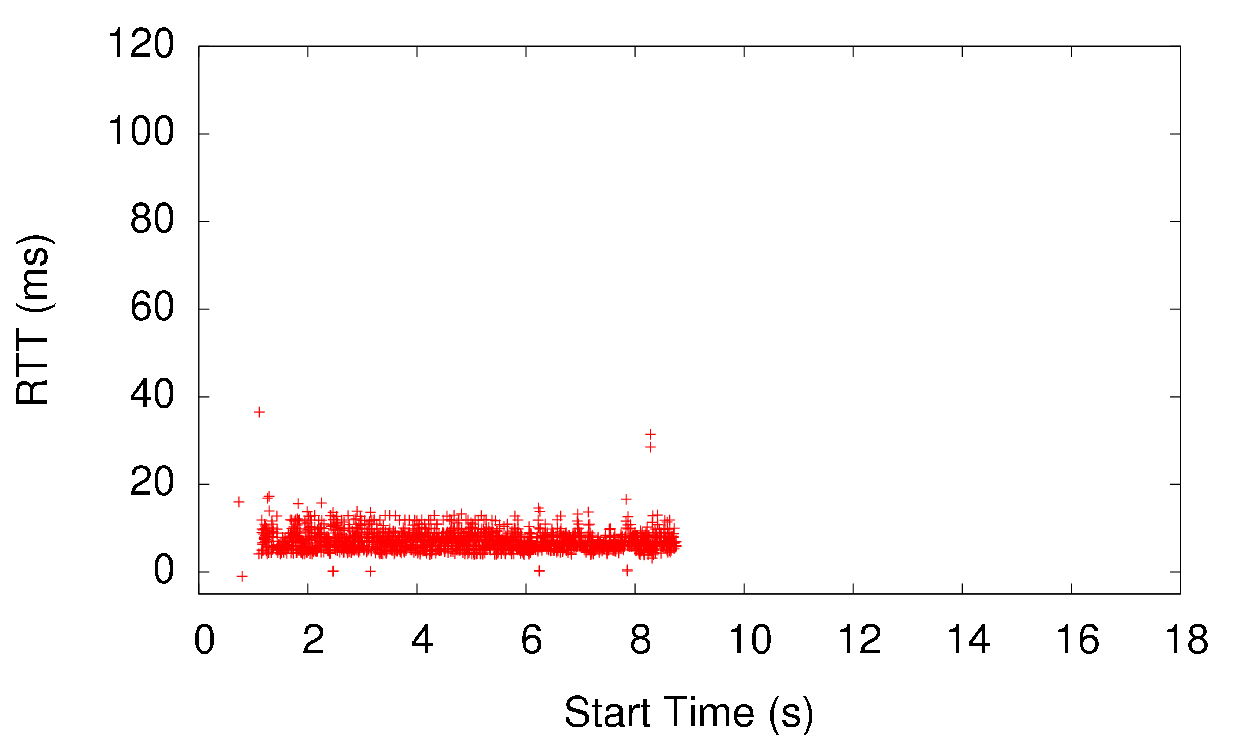}
\includegraphics[width=.49\textwidth]{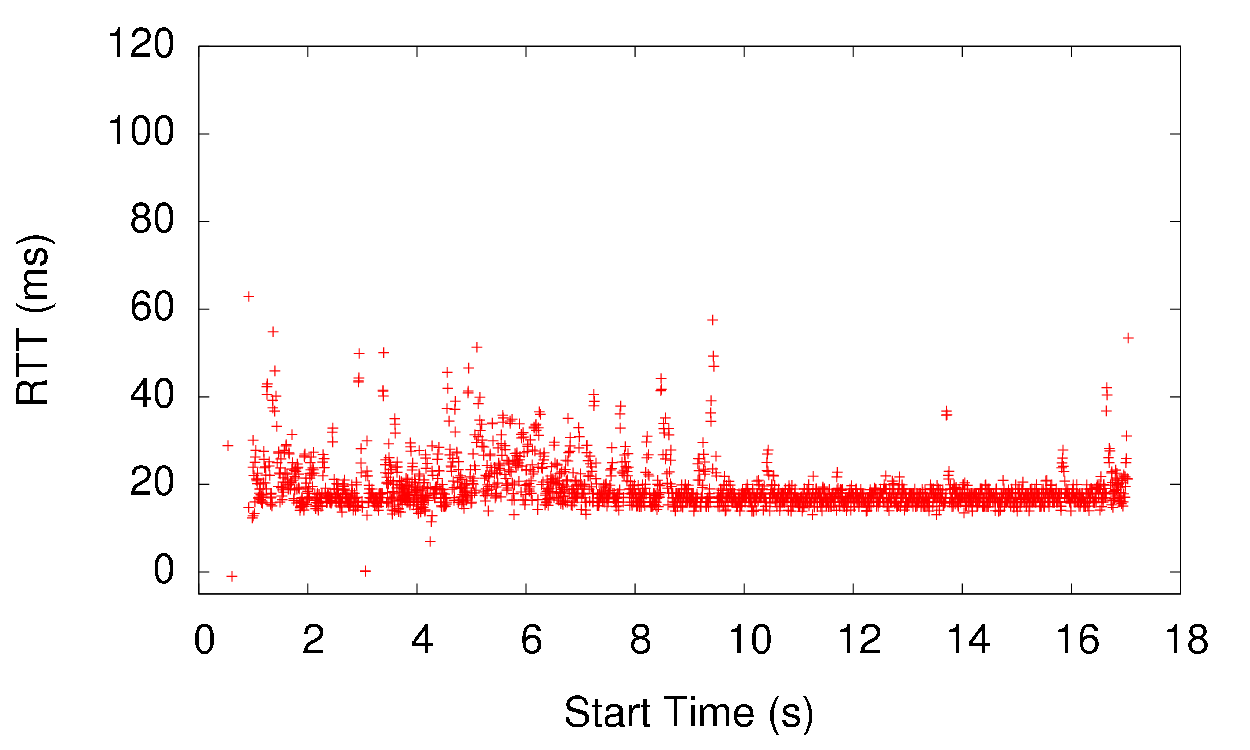}
\vspace{-0.2cm}
\caption{\small Left: RTT for 10MB of content over NDN (limited anonymity).
Right: RTT for 10MB of content over \ANON{} (full anonymity).}
\label{fig:andora}
\end{figure}

In order to compare \ANON{}'s computational overhead with a similar anonymizing tool,
we deployed Tor over ONL and measured its overhead over TCP/IP.
We measured performance of TCP/IP baseline deploying five switches, connected in a line,
and two Linux machines (one at each end): the first acting as client (running {\small \tt curl}),
the second -- as server (running {\small \tt lighttpd} HTTP server).
Performance of Tor was measured on a topology that closely mimics that of TCP/IP baseline: five switches,
connecting three Tor relays, a client and a server. To ensure ``line'' topology, Tor client is configured to use
explicit entry and exit nodes; DNS lookups are avoided by using IP addresses in all tests.

Before discussing the results, we mention some comparison details. NDN is a research project and its
code is optimized for functionality rather than performance. It provides content authentication through
digital signatures -- a computationally expensive feature not present in either TCP/IP or Tor.
NDN stack currently runs as a user-space application, in contrast to TCP/IP that runs in kernel-space.
Finally, in all our experiments, NDN had to run on top of TCP/IP (rather than at layer 2) due to limitations
of the underlying ONL testbed. Consequently, we believe a fair comparison between \ANON{}
and Tor can only be achieved by focusing the analysis on \emph{relative} overhead imposed
by each, over the network it is deployed, i.e., NDN and TCP/IP respectively.

Figure~\ref{fig:download} shows the performance of \ANON{} and Tor with respect to their baselines.
The graph on the left shows the measurements including the time required to setup a Tor circuit
and all ephemeral circuits for \ANON{}. Session-based \ANON{} is denoted by \ANON{}-S,
while \ANON{} with asymmetric encryption is referred to as \ANON{}-A. For small- to medium-size
files (1-10MB), overhead of \ANON{}-A is between 1.5$\times$ and 1.75$\times$. As expected,
\ANON{}-S exhibits lower overhead (1.45$\times$ to 1.7$\times$) due to more efficient symmetric encryption.

In comparison, Tor's download time for the same amount of data is between 2.3 and 7 times higher
than that of TCP/IP. This imposes significant overhead for content size that fits many typical web pages.
Whereas, \ANON{} is efficient in anonymizing such traffic patterns.
Large file transfers are more efficient with Tor, which increases the total download time by about
1.4 times, compared to 2.4 and 2.1 of \ANON{}-A and \ANON{}-S.

The right-side graph in Figure~\ref{fig:download} shows the relative speed of three approaches without
including circuit setup time. Our measurements show that overhead of ephemeral circuit creation in
\ANON{}-S is negligible. Since a new ephemeral circuit must be selected for every interest with
\ANON{}-A, we simply report the same values from the previous graph. Results confirm that
overhead of circuit creation in Tor is significant when retrieving small-size content. Removing
this initialization phase from the measurements significantly reduces Tor's overhead.
However, the overhead of \ANON{} with respect to its baseline is still smaller than that
of Tor for content up to 10MB.

In absolute terms (comparing raw download times), Tor + TCP/IP
performs better than \ANON{} + NDN in our testbed experiments.
However, we believe that, in a realistic geographically-distributed
deployment setting with limited-bandwidth links, \ANON{} + NDN would
provide a significant performance advantage over Tor + TCP/IP due to
its shorter (ephemeral) circuits. In other words, we anticipate that
shorter circuits and content caching in \ANON{} + NDN would result in
appreciably lower overall download times than Tor + TCP/IP in a
global internet setting.
\begin{figure}[tb]
\centering
\includegraphics[width=.49\textwidth]{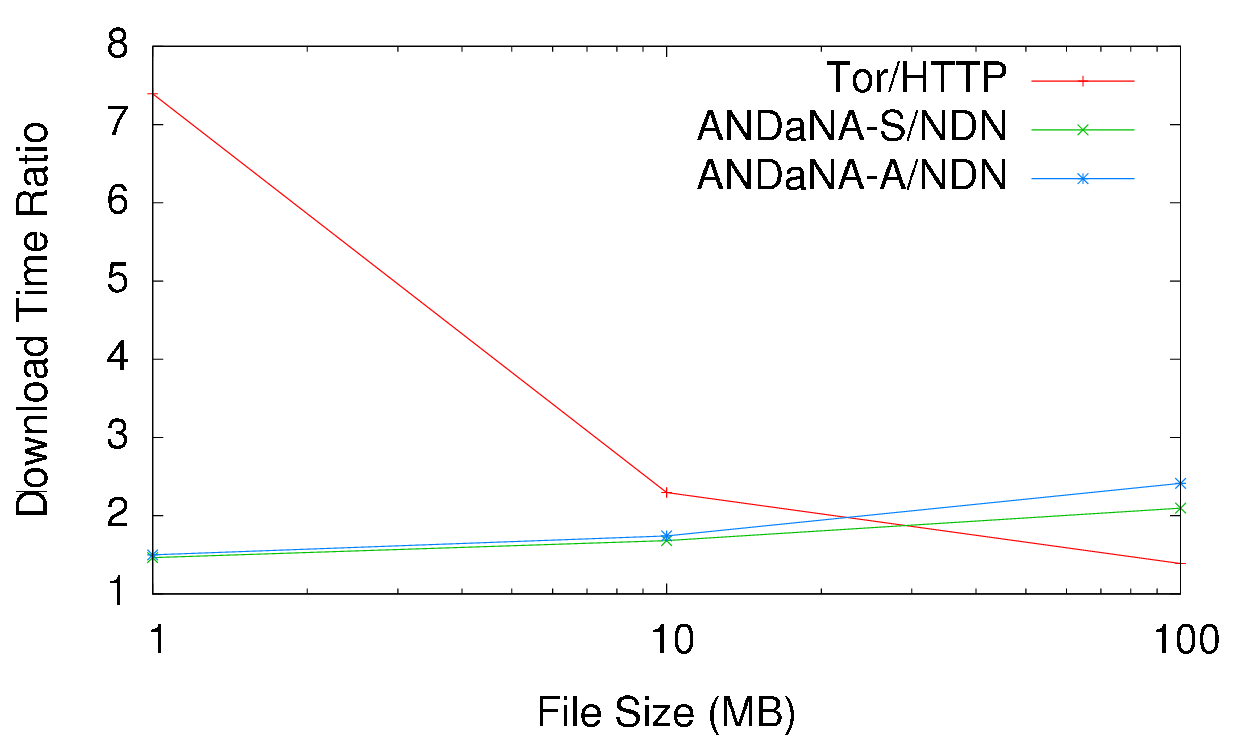}
\includegraphics[width=.49\textwidth]{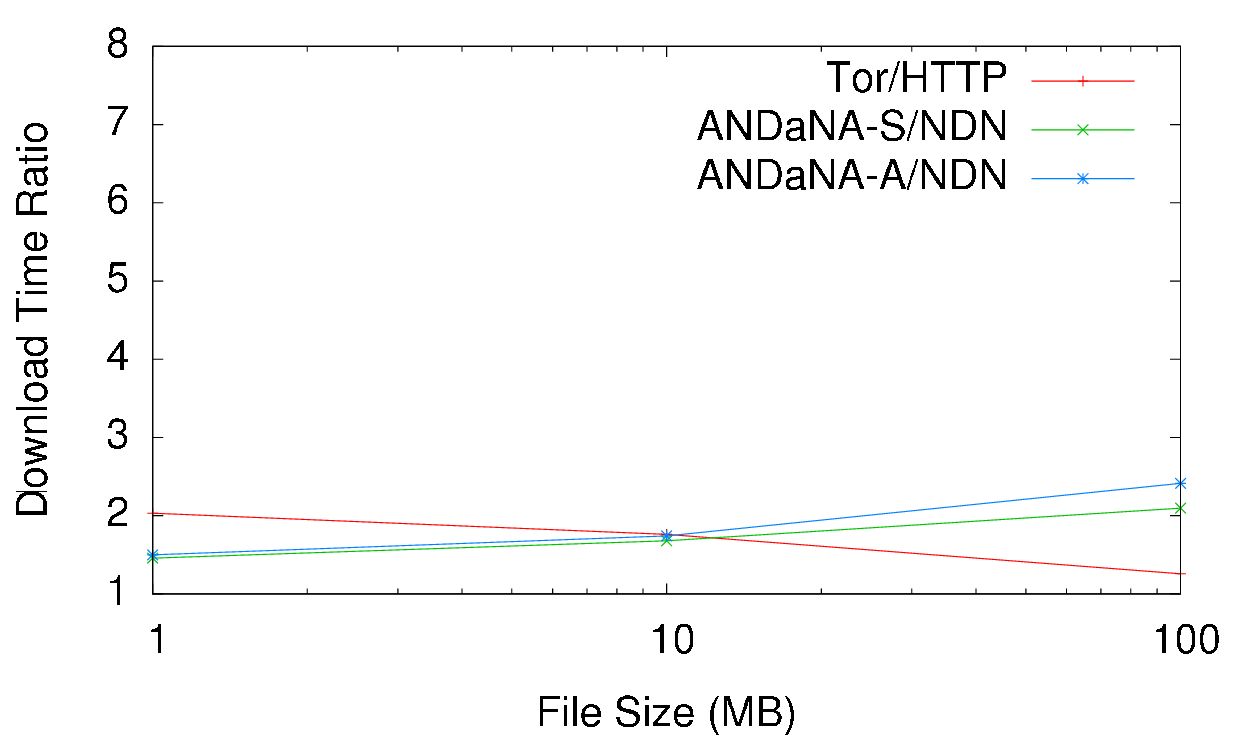}

\caption{\small Comparison of 1, 10, and 100MB file download times over Tor, \ANON{}-S and \ANON{}-A with
respect to respective baselines. Left: transfer time {\em and} circuit setup time. Right: transfer time only.}
\label{fig:download}
\end{figure}

\section{Conclusions and Future Work}
\label{sec:conclusions}
Content-centric networking is a major transition from today's
world that focuses on communication end-points. NDN project
represents one of the most visible current research efforts aiming to bring
content-centric networking into the foreground by using it as a
possible future Internet architecture.
Despite some privacy-friendly features and side-effects, NDN poses
some interesting privacy challenges. This work presents an initial
attempt to provide anonymity in NDN. The main contribution of
this work is threefold: (1) exploration of privacy issues in
NDN, (2) design of an anonymization tool -- \ANON, and (3) its
security analysis and performance assessment.

At the same time, particularly because the entire NDN project (and,
of course, \ANON) represent work-in-progress, one of the main goals
of this paper is to solicit comments from the security research
community. Also, since our work merely scratches the surface of
privacy issues in content-centric networking and NDN, a number of
issues are left for future work, including:
\begin{itemize}
\item More performance experimentation with \ANON, especially, in
    larger testbeds and under various traffic load / congestion
    scenarios. (This should lead to better code profiling and lower
    overhead.)
\item Comprehensive directory service for effective large-scale
    distribution of up-to-date AR information.
\item In-depth study of both privacy and performance trade-offs in
    the use of asymmetric vs.~symmetric \ANON{} variants.
\item DoS mitigation measures, such as computational puzzles for
    circuit establishment.
\item Red-teaming experiments to assess realistic privacy attainable
    with \ANON.
\item Modification of \ANON\ to support other emerging content-centric
    architectures and comparative experiments among them.
\end{itemize}

\section*{Acknowledgments}
We thank NDSS'12 anonymous reviewers for their valuable feedback. We
are also grateful to Van Jacobson, Jim Thornton, Kasper Rasmussen,
Yanbin Lu, Lixia Zhang and Mark Baugher for their helpful input and
comments on earlier drafts of this paper. This work was conducted in
the context of the NSF project: ``CNS-1040802: FIA: Collaborative
Research: Named Data Networking (NDN)''.

\bibliographystyle{abbrv}
{
\bibliography{andora}
}
\appendix

\section{Security Proofs}
\label{appendix_security_proofs}

\noindent {\em Justification of Claim \ref{claim1}:} Suppose that
Claim \ref{claim1} is false. Then, $\adv$\  can be used to construct
an algorithm ${\sf Sim}$ that breaks the CPA-secure encryption scheme
$\mathcal{E}$ as follows: ${\sf Sim}$ plays the CPA-security game
with a challenger, that selects a public key $pk$. ${\sf Sim}$
selects a public key $pk_2$ and initializes $\adv$, that eventually
returns two interests $int^0, int^1$ of its choice. ${\sf Sim}$ sends
$c_0 = \mathcal{E}_{pk_2}(\INT^0)$ and $c_1 =
\mathcal{E}_{pk_2}(\INT^1)$ to the challenger, that returns $c^* =
\mathcal{E}_{pk}(c_b) =
\mathcal{E}_{pk}(\mathcal{E}_{pk_2}(\INT^b))$. ${\sf Sim}$ sends
$(c^*, c_0, c_1)$ to the challenger that eventually returns its
choice $b'$. ${\sf Sim}$ outputs $b'$ as its choice. The output of
${\sf Sim}$ is $b'=b$ iff $\adv$ guesses $b'$ correctly. Since $\adv$
guesses $b'$ correctly with non negligible advantage over $1/2$,
 ${\sf Sim}$ breaks the CPA-security of $\mathcal{E}$ with non negligible advantage.
This violates the hypothesis of Claim \ref{claim1}, and, therefore, such $\adv$ cannot exist. \qed

\smallskip

\begin{proof}[Proof of Theorem \ref{cons_anon}  --- Consumer Anonymity (sketch)]
We prove that each condition in Theorem \ref{cons_anon} implies consumer anonymity:
\begin{compactenum}
\item Assume that, for each $u'\neq u$ there exists no configuration
$C'\equiv_{\adv} C$ with respect to $\adv$\ such that $C'(u') = C(u)$. $\adv$ cannot determine that
$C(u)\notin C'$ using only $C_2(u)$, $C_3(u)$ and $C_4(u)$: if $C_1(u) = C'_1(u')$ for some $C'\equiv_{Adv} C$
and $u'$ (i.e. there exist an indistinguishable configuration with respect to ${Adv}$ where a consumer different from
$u$ sends an interest to $C_1(u)$ through interface ${\sf if}^{C_1(u)}_i$ and $u, u' \in A_{{\sf if}^{C_1(u)}_i}$),
then there must exist a tuple $C'(u') = C(u)$ since (a possibly compromised) $r$ cannot process interests coming
from consumers in the same anonymity set differently -- that would imply that they are not in the same anonymity set.
Therefore, for each configuration $C'\equiv_{Adv} C$, and for each
$u'\neq u$ $\exists C'_1(u') = C_1(u) \Rightarrow \exists C'(u') = C(u)$.

For this reason,  $C'_1(u') \neq C_1(u)$ for all $C'\equiv_{\adv} C$ and for all $u'\neq u$, i.e. $\forall
C'_1(u') = C_1(u).C(u) \notin C'$. This is true if and only if $\adv$\ controls at least one interface
${\sf if}^r_i \in {\sf path}^{C_4(u)}$ for which $u'$ is not in the anonymity set of ${\sf if}^r_i$,
i.e., $\exists {\sf if}^r_i \in {\sf path}^{C_4(u)} \cap {\sf IF}_{\adv}$ s.t. $u' \notin A_{{\sf if}^r_i}$
Since this contradicts the hypothesis, there must exist a configuration $C'$ indistinguishable
from $C$ with respect to $\adv$\ such that $C'(u') = C(u)$.

\item We assume that, for each $u' \neq u$, $\adv$\ can distinguish between interests from
$u$ from those from $u'$ (i.e., condition 1 of theorem \ref{cons_anon} does not hold). We show
how to prove theorem \ref{cons_anon} by reduction. Assume that there exists an efficient adversary $\adv$\ such that ${\sf C}_\adv = {\sf C} \setminus \{u, u'\}$ and ${\sf R}_\adv = {\sf R} \setminus \{r_1\}$ (i.e., $\adv$ compromised all entities, except $u, u'$ and $r_1$). Suppose that
$C(u) = (r_1, r_2, p, \INT^0_{pk_1, pk_2})$, $C(u') = (r_1, r'_2, p', \INT^1_{pk_1, pk'_2})$ for some $r_2, r'_2, p, p', \INT^0, \INT^1$.
For each $C'$, $\adv$\ outputs: 1 on input of $C$ and 0 on input of $C'$ with non-negligible probability, where $C'(u) = C(u')$ and $C'(u') = C(u)$.
In other words, there is no configuration for which $C\equiv_{\adv} C'$ holds. We sketch how $\adv$\ can be used as a
subroutine in a simulator ${\sf Sim}$ that breaks Claim \ref{claim1}.

${\sf Sim}$ creates a random network topology $N$ and inputs it to $\adv$. ${\sf Sim}$ also inputs the information that $\adv$\ would obtain by compromising all entities in $N$ except $u, u'$ and $r_1$.
As such, ${\sf Sim}$ also includes $\INT^b_{pk_1,pk_2}$ and $\INT^0_{pk_2}$, $\INT^1_{pk_2}$ received from the challenger of Claim \ref{claim1} to the input of $\adv$.
Then, ${\sf Sim}$ sends to $\adv$ configurations
$C$ and $C'$, where $C$ is identical to $C'$, except that $C(u) = C'(u')$ and $C(u') = C'(u)$, and $C(u) \neq C(u')$.
We have that $b=1$ iff $\adv$\ outputs 1.
Since existence of ${\sf Sim}$ violates Claim \ref{claim1}, $\adv$\ cannot exits.

\item We assume that, for each $u' \neq u$, $\adv$\ can distinguish between interests from
 $u$ from those from $u'$ (i.e., condition 1 of theorem \ref{cons_anon} does not hold) and that the first router in
 $u$'s and $u'$'s paths is compromised, i.e., condition 2 of theorem \ref{cons_anon} does not hold. We then prove
 theorem \ref{cons_anon} by reduction. Assume that there exists an efficient adversary $\adv$\ such that ${\sf C}_\adv = {\sf C} \setminus \{u, u'\}$ and ${\sf R}_\adv = {\sf R} \setminus \{r_2\}$ (i.e., $\adv$ compromised all entities, except $u, u'$ and $r_2$). Suppose that
 $C(u) = (r_1, r_2, p, \INT^0_{pk_1, pk_2})$, $C(u') = (r'_1, r_2, p', \INT^1_{pk'_1, pk_2})$ for some $r_1, r'_1, p, p', \INT^0, \INT^1$.
For each $C'$, $\adv$\ outputs 1 on input of $C$, and 0 on input of $C'$, where $C'(u) = C(u')$ and $C'(u') = C(u)$.
In other words, there is no configuration where $C\equiv_{Adv} C'$ holds. We sketch how $\adv$\ can be used as a subroutine
in a simulator ${\sf Sim}$ to determine, given $\INT_{pk_2}$ and $\INT'_{pk_2}$, whether $\INT = \INT'$.

${\sf Sim}$ creates a random network topology $N$ and inputs it to $\adv$. ${\sf Sim}$ also inputs the information that $\adv$\ would obtain by compromising all entities in $N$ except for $u, u'$ and $r_2$.
${\sf Sim}$ interacts with the challenger of Claim \ref{claim1} setting the innermost key of its challenge, denoted as $\overline{pk_2}$, to $\perp$. ${\sf Sim}$ receives $\INT^b_{pk_1}$ for some $\INT^0, \INT^1$ of its choice, and adds $\INT^b_{pk_1, \overline{pk_2}}$, $\INT^b_{\overline{pk_2}}$ and $\INT^b_{\overline{pk_2}}$
 to the input of $\adv$.
Then ${\sf Sim}$ sends to $\adv$ configurations $C$ and $C'$, where $C$ is identical to $C'$ except that $C(u) = C'(u')$ and
$C(u') = C'(u)$, and $C(u) \neq C(u')$. We have that $b=1$ iff ${Adv}$ outputs 1.
Since the existence of ${\sf Sim}$ would violate Claim \ref{claim1}, $\adv$\ cannot exits. \qedhere
\end{compactenum}
\end{proof}

\bigskip

\begin{proof}[Proof of Theorem \ref{prod_anon} --- Producer Anonymity (sketch)]
We prove that each condition in Theorem \ref{prod_anon} implies producer anonymity:
\begin{compactenum}
\item Let $C_4(u') = \INT_{pk_1, pk'_2}'$ and let $C'$ be identical to $C$ except that
$C'(u) = (C_1(u), C_2(u), C_3(u), C_4(u'))$ and  $C'(u') = (C_1(u'), C_2(u'), C_3(u'), C_4(u))$.
In other words, $C'$ is a configuration where $\INT_{pk_1, pk_2}$ is sent to a producer different from $p$.
In this setting, $\adv$\ can only distinguish $C'$ and $C$ by distinguishing $C'(u)$ and $C'(u')$.
Claim \ref{claim1} guarantees that $\adv$\ that observes $\INT_{pk_1, pk_2}$ and $\INT'_{pk_1, pk'_2}$
cannot determine which corresponds to $\INT$ and which -- to $\INT'$. Moreover, Assumption
\ref{assumption2} prevents $\adv$\ from linking the output of non-compromised router $C_1(u)$ with
$\INT_{pk_1, pk_2}$ and $\INT_{pk_1, pk'_2}'$. Therefore, $C \equiv_{\adv} C'$.

\item Similarly, let $C_4(u') = \INT_{pk_1, pk'_2}'$ and let $C'$ be identical to $C$ except that
$C'(u) = (C_1(u), C_2(u),$ $C_3(u), C_4(u'))$ and  $C'(u') = (C_1(u'), C_2(u'), C_3(u'), C_4(u))$.
We assume that $C_1(u)$ and $C_1(u')$ are compromised. In this setting, $\adv$\ can only distinguish
between $C'$ and $C$ by distinguishing $C'(u)$ and $C'(u')$. Claim \ref{claim1} guarantees that any $\adv$\
that observes $\INT_{pk_1, pk_2}$ and $\INT'_{pk_1, pk'_2}$ cannot determine which corresponds to $\INT$
and which -- to $\INT'$. Moreover, Assumption \ref{assumption2} prevents $\adv$\ from linking the output of
non-compromised router $C_2(u)$ with $\INT_{pk_2}$ and $\INT_{pk'_2}'$. Therefore,
$C \equiv_{Adv} C'$. \qedhere

\end{compactenum}
\end{proof}

\onecolumn

\newpage

\section{Performance Evaluation: Additional Results}
\label{appendix_performance}

\begin{figure}[h!]
\centering
\includegraphics[width=.70\textwidth]{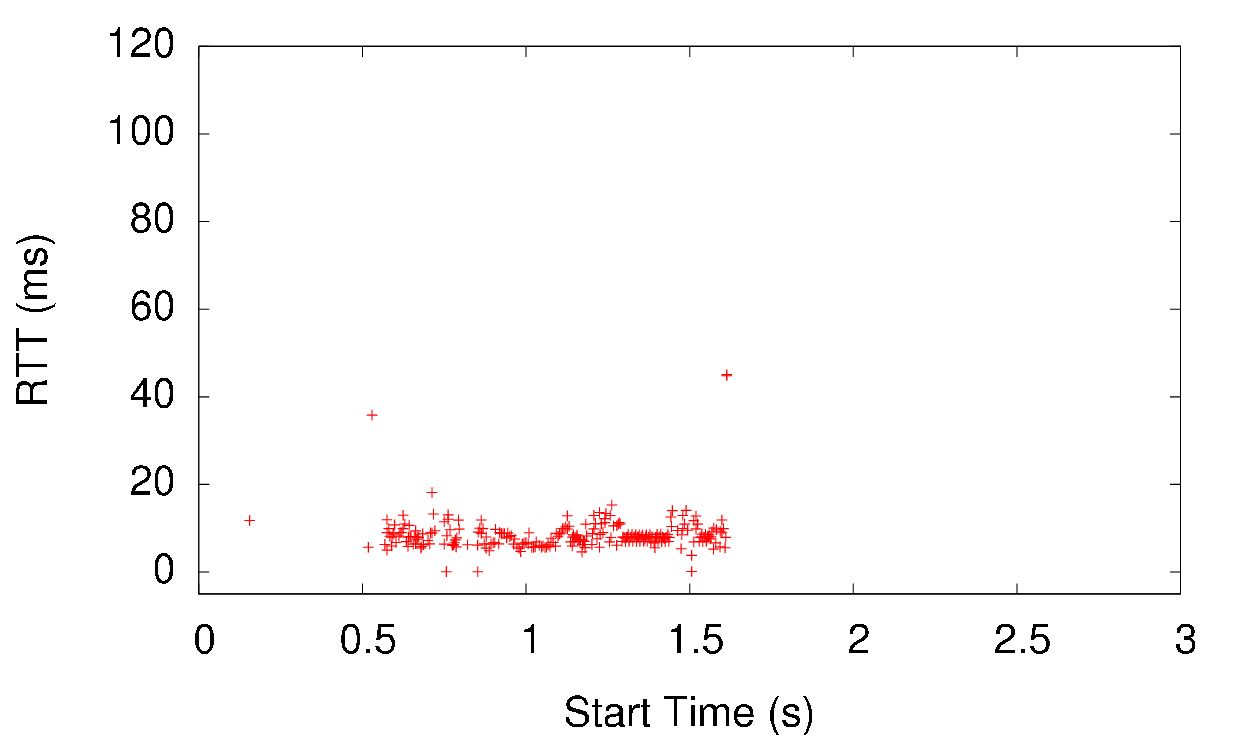} \includegraphics[width=.70\textwidth]{rtt_over_time_graphs/ccn_ccngetfile_rtt_over_time_test10MB}
\includegraphics[width=.70\textwidth]{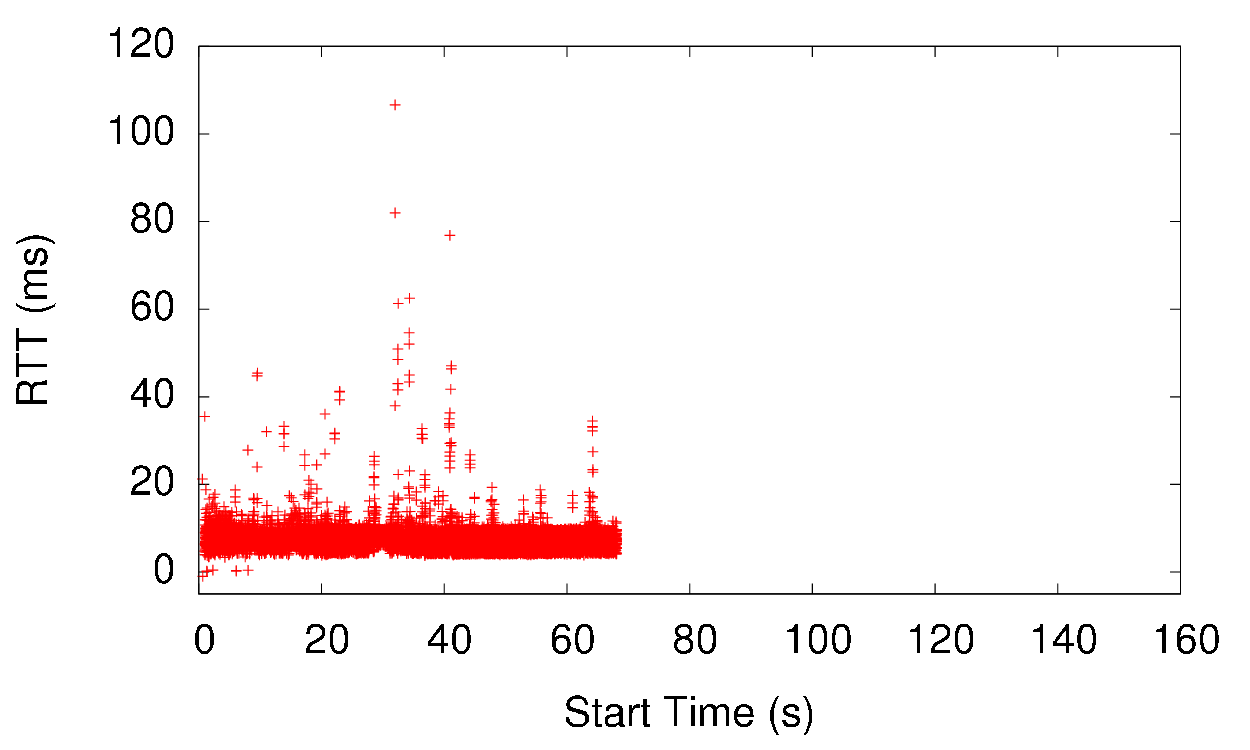}

\caption{\small Round trip time for transferring 1, 10 and 100MB of content over NDN (limited anonymity)}
\label{fig:ndn-all}
\end{figure}

\label{appendix_performance}
\begin{figure*}[htbp]
\centering
\includegraphics[width=.70\textwidth]{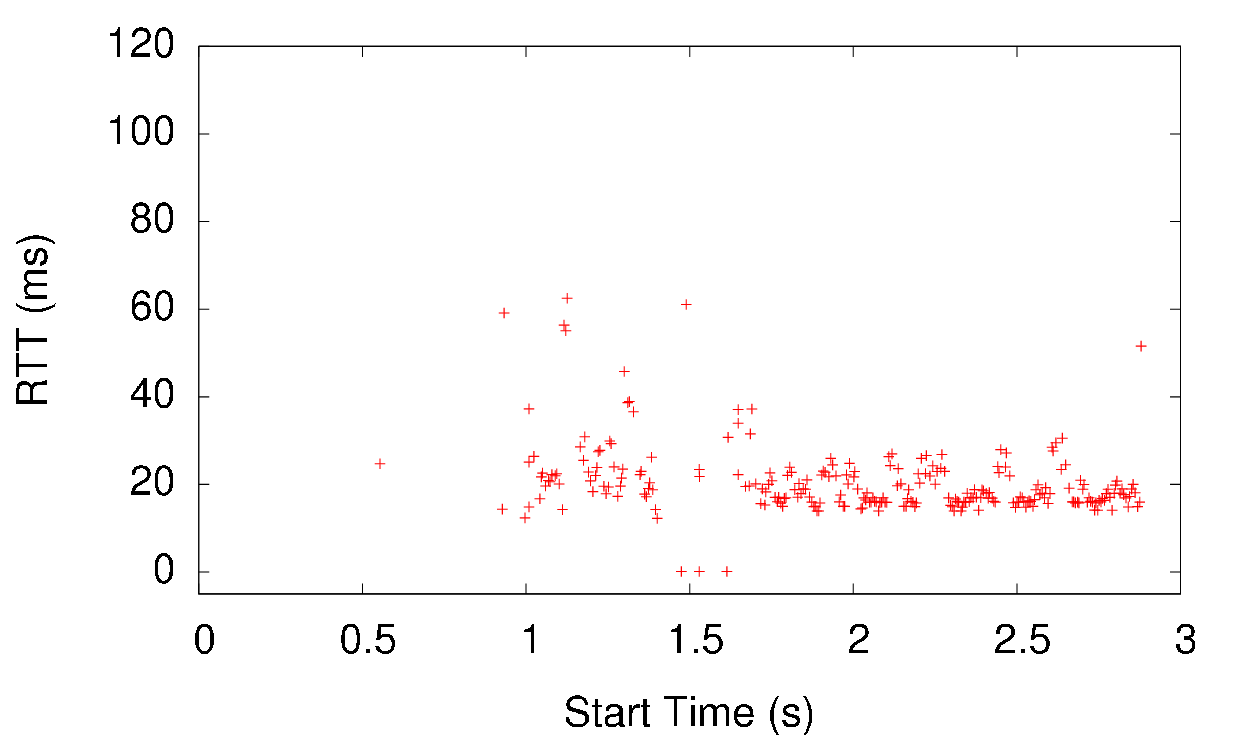}
\includegraphics[width=.70\textwidth]{rtt_over_time_graphs/anon_ccngetfile_rtt_over_time_test10MB}
\includegraphics[width=.70\textwidth]{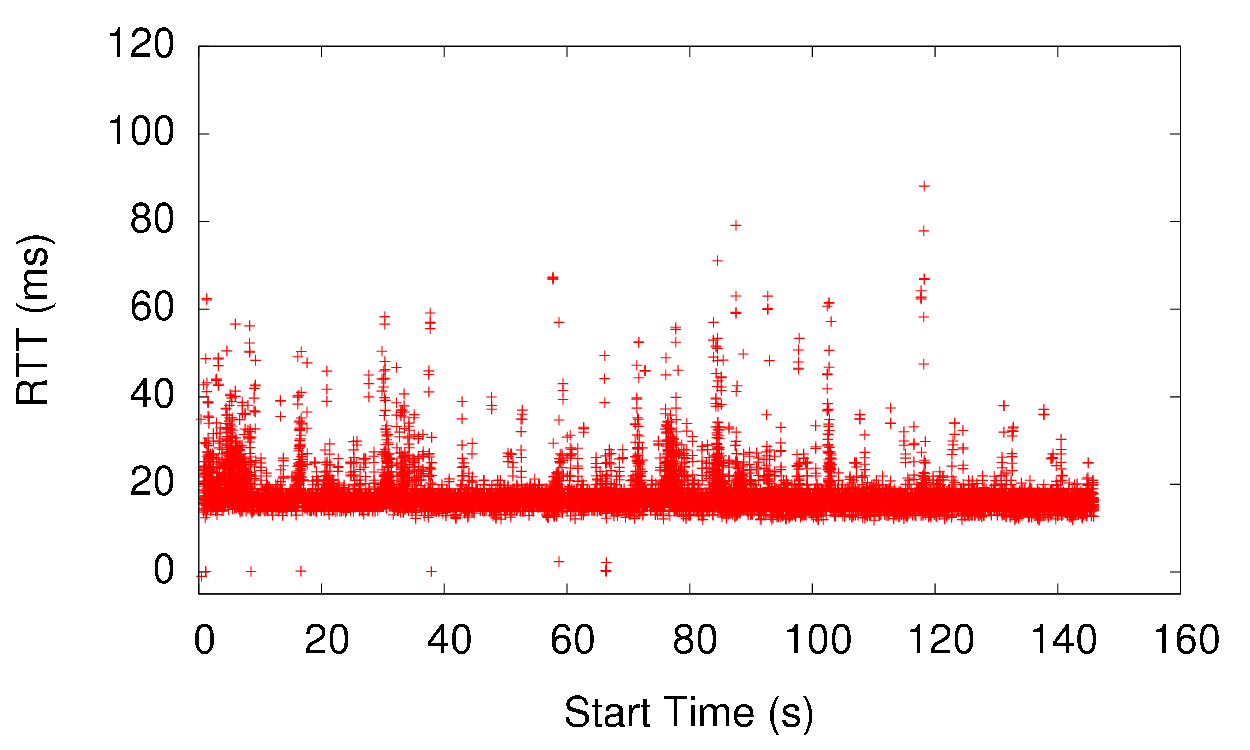}
\caption{\small Round trip time for transferring 1, 10 and 100MB of content over \ANON{} (full anonymity).}
\label{fig:andora-all}
\end{figure*}

\end{document}